\newif\ifpreprint%
\theoremstyle{plain}
\theoremstyle{definition}
\newtheorem{lemma}{Lemma}[]
\DeclareMathOperator*{\argmax}{argmax} 
\newcommand{\ssm}{\scriptscriptstyle\rm}
\renewcommand{\theta}{\vartheta}
\renewcommand{\phi}{\varphi}
\begin{document}
	
	\title{Symmetries and phase diagrams with real-space mutual information neural estimation}

	\date{\today}
	
	\author{Doruk Efe G\"okmen}
	\affiliation{Institute for Theoretical Physics, ETH Zurich, 8093 Zurich, Switzerland}
	\author{Zohar Ringel}
	\affiliation{Racah Institute of Physics, The Hebrew University of Jerusalem, Jerusalem 9190401, Israel}
	\author{Sebastian D. Huber}
	\affiliation{Institute for Theoretical Physics, ETH Zurich, 8093 Zurich, Switzerland}
	\author{Maciej Koch-Janusz}
	\affiliation{Institute for Theoretical Physics, ETH Zurich, 8093 Zurich, Switzerland}
	\affiliation{Department of Physics, University of Zurich, 8057 Zurich, Switzerland}
	\affiliation{James Franck Institute, The University of Chicago, Chicago, Illinois 60637, USA}


	\begin{abstract}	
	Real-space mutual information (RSMI) was shown to be an important quantity, formally and from a numerical standpoint, in finding coarse-grained descriptions of physical systems. It very generally quantifies spatial correlations, and can give rise to \emph{constructive} algorithms extracting relevant degrees of freedom. Efficient and reliable estimation or maximization of RSMI is, however, numerically challenging. A recent breakthrough in theoretical machine learning has been the introduction of variational lower bounds for mutual information, parametrized by neural networks. Here we describe in detail how these results can be combined with differentiable coarse-graining operations to develop a single unsupervised neural-network based algorithm, the RSMI-NE, efficiently extracting the relevant degrees of freedom in the form of the operators of effective field theories, directly from real-space configurations. 
	We study the information contained in the \emph{statistical ensemble} of constructed coarse-graining transformations, and its recovery from partial input data using a secondary machine learning analysis applied to this ensemble.	In particular, we show how symmetries, also emergent, can be identified. We demonstrate the extraction of the phase diagram and the order parameters for equilibrium systems, and consider also an example of a non-equilibrium problem.
	\end{abstract}
	
	\maketitle
	
	\section{Introduction}
	Constructing coarse-grained descriptions of physical systems is a fundamental operation, both practically and from the foundational theory viewpoint. It is often difficult to simulate a complex systems from first principles, even if a microscopic model exists, necessitating layers of descriptions and independent simulations of properties at differing scales.\cite{pt_multiscale} This is the practical counterpart to the powerful methodology of deriving effective theories providing a summary of physics at this scale, which may involve qualitatively new emergent properties.
	
	The above idea finds its full realization within the framework of the renormalization group (RG),\cite{PhysicsPhysiqueFizika.2.263,Wilson1974,Wilson1975,Fisher1998} where the coarse-graining transformation of local degrees of freedom (DOFs) gives rise to the RG-flow in the space of theories. 
	While momentum-space methods had a profound impact on physics,\cite{PhysRevB.4.3184,PhysRevLett.30.1343} real-space renormalization, despite some veritable successes,\cite{PhysRevLett.69.534,PhysRevB.51.6411,PhysRevB.50.3799,IGLOI2005277} has not yet reached a similar status. 
	In practice, real-space procedures are non-trivial to design, involve ad-hoc choices, and can rarely be executed exactly, amplifying any approximation as they are iterated. Moreover, analytical understanding is often lacking. At the same time disordered and complex systems, where the notion of momentum may not be available, are naturally amenable to real-space approaches and thus provide a strong motivation for development of improved methods.

	The arbitrariness of the real-space RG transformation choice can be drastically reduced: an optimal coarse graining rule \emph{for a given system} exists,\cite{Koch-Janusz2018,optimalRSMI} theoretically defined by maximising the information the coarse DOFs retain about distant parts of the system \emph{i.e.}~the real-space mutual information (RSMI). Such coarse-grained representation is an optimal (lossy) compression of information about long-distance properties of the system. 
	This observation can, however, be made much more powerful. Since long-range information is due to the scaling operators, its optimal compression not only defines a better iterative RG rule, but, performed \emph{for each point in the phase diagram}, should allow to \emph{directly} extract all the operators themselves, without explicitly executing the RG flow. This was formally proven, at least for critical systems in equilibrium.\cite{Gordon2020} Conceptually, this opens the possibility of identifying formal components of the effective theory directly from real-space data, using only its statistical properties. 
	
		\begin{figure}
		\centering
		\includegraphics[width=\linewidth]{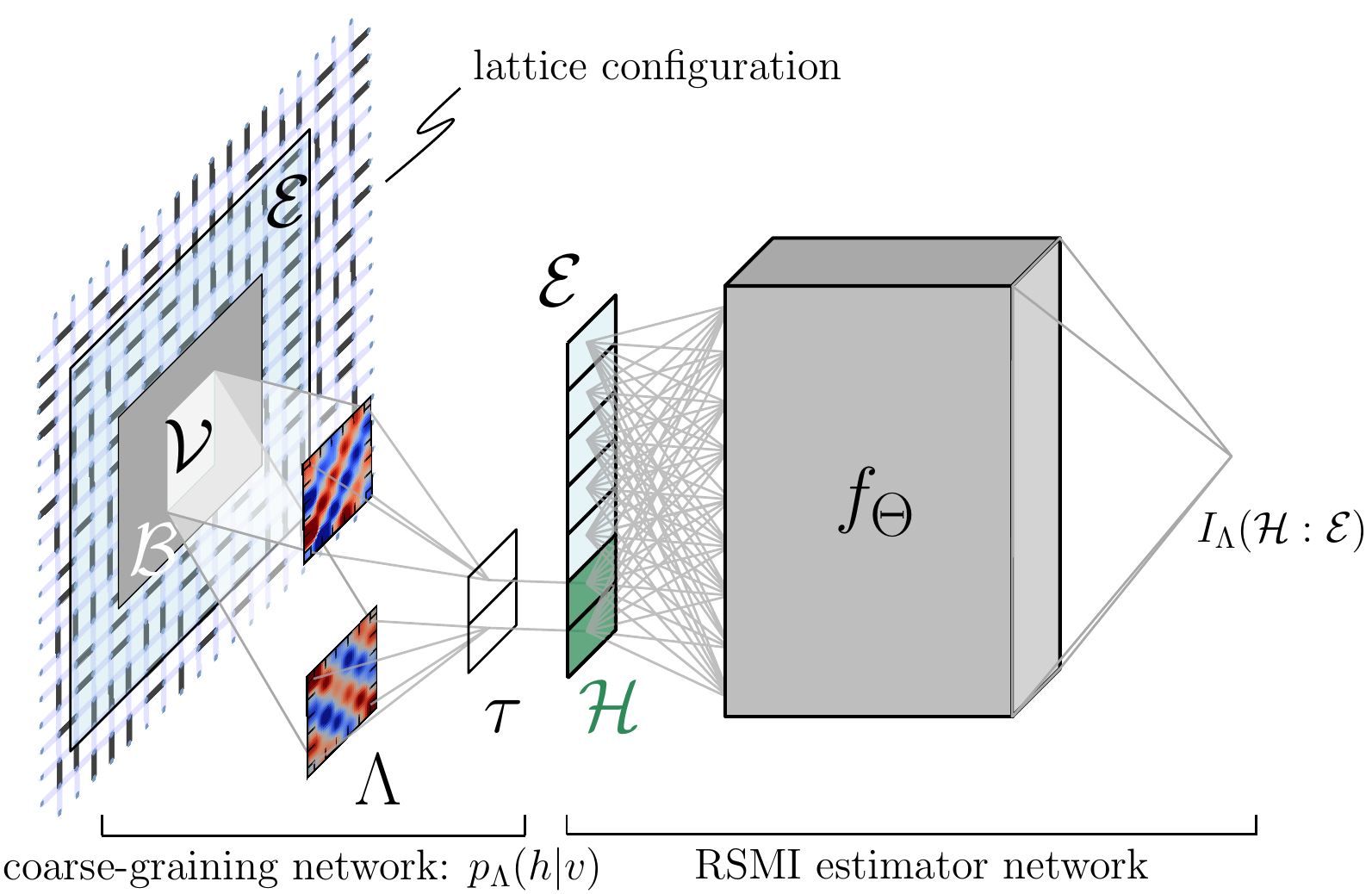}
		\caption{{\bf The architecture of RSMI-NE.} The coarse-graining transformation extracts the order parameters and relevant operators via the transformations $\Lambda$ and discretising step $\tau$. The long-range information $I_{\Lambda}(\mathcal{H}:\mathcal{E})$ which $\Lambda$ maximize is estimated by $f_\Theta$, all of which are parametrized by neural networks and co-trained together.}
		\label{fig:netfig}
	\end{figure}
	
	The core quantity of interest, the RSMI, is however challenging to maximize or estimate, as is mutual information (MI) in general,\cite{poole2019variational} which in practice limits the applicability of such general approach to but the simplest of systems.

	Here we overcome this limitation by developing a highly efficient algorithm, the RSMI neural estimator (RSMI-NE), computing the optimal coarse-grainings. Our approach is based on state-of-art machine learning techniques of estimating mutual information by maximising its rigorous lower-bounds.\cite{belghazi2018mine,poole2019variational} A key algorithmic idea, introduced in Ref.~\onlinecite{belghazi2018mine}, is to parametrize these bounds by a neural network $f_\Theta$, optimizing over its parameters $\Theta$. We use this differentiable variational \emph{ansatz} for RSMI to optimize the RG transformation, expressed by another neural network \emph{ansatz} with parameters $\Lambda$. 
	Crucially, both networks are combined and trained together using stochastic gradient descent back-propagating through the whole structure (see Fig.~\ref{fig:netfig}), even in the presence of discrete-valued coarse-graining.
	The algorithm is several orders of magnitude faster than RSMI estimation in Ref.\onlinecite{Koch-Janusz2018}, allowing to explore large systems and length-scales, demonstrating superior convergence and stability. 
	
	The methods presented here were advertised in a companion work,\cite{rsmine_letter} where the emphasis was on the 
	extraction of relevant operators on the lattice. Here, in contrast, we have two other areas of focus. First, we present and examine in detail the algorithmic aspects of the method, and the ML/statistics tools it uses.
	Second, we study the properties of a novel object: the \emph{ensemble} of the coarse-graining transformations. Taking a step beyond identifying relevant operators in a single instance of RSMI optimisation, we demonstrate that their distribution, generated by multiple independent optimisation runs, reveals information about the symmetries of the system, broken and  \emph{emergent}. We consider application of the RSMI formalism to further systems, including non-equilibrium ones. 
	
	The manuscript is organized as follows: In Sec.~\ref{RSMINE} we describe the main theoretical and algorithmic components of the RSMI-NE. Specifically, in Sec.~\ref{RSMINE}.A the general idea of the RSMI approach is briefly described, Sec.~\ref{RSMINE}.B reviews the neural-network based variational lower bounds on MI which are used and implemented in the ${\textsf {RSMI-NE}}$ package\footnote{Source code for the RSMI-NE is available online at \url{https://github.com/RSMI-NE/RSMI-NE}.} we make available. Section~\ref{RSMINE}.C describes the parametrisation of the coarse-graining as a neural network, and ensuring its \emph{differentiability} for \emph{discrete} latent variables, Sec.~\ref{RSMINE}.D combines these elements into the complete algorithm and discusses convergence properties. Turning to physics, in
	Sec.\ref{equilibrium} we detail the physical information (phase diagram, operators, symmetries) contained in the algorithm outputs, testing it in equilibrium statistical systems. Particularly, in Sec.\ref{equilibrium}C, we show that the ensemble of coarse-graining transformations allows to identify the symmetries of the system, also \emph{emergent} ones, and retrieve them from incomplete data using a secondary ML analysis of the ensemble.	
	The possible extension to non-equilibrium problems is discussed in Sec.\ref{section:chipping}, and the algorithm is validated on the example of the non-equilibrium chipping model. 
	We conclude in Sec. \ref{conclusion} with a brief discussion of the scope of the method, and its possible extensions and  applications.
	The Supplemental Material gives technical details related to the code and data generation.\footnote{See Supplemental Material for more details of the RSMI-NE algorithm and the directed loop Monte Carlo algorithm for the dimer model, which includes Refs.~\onlinecite{10.1093/imamat/6.3.222, BFGS, PhysRevB.73.144504, kenyon1999trees}}
	
	\section{The RSMI-NE algorithm}\label{RSMINE}
	
	\subsection{The RSMI variational principle}
	RG is rooted in the observation that most microscopic details are irrelevant for large-scale behaviour of physical systems. It is, however, necessary to define a firm basis for determining exactly which short-scale details are projected out in a \emph{coarse-grained} description. This has proven difficult in real-space.\cite{vanEnter1993}
	To address this issue, Ref.~\onlinecite{Koch-Janusz2018} proposed that the optimal real-space RG transformation maximises an information theoretical quantity, the real-space mutual information (RSMI), measuring the information shared between a coarse-grained degree of freedom and its distant environment at the original fine level. 
	This constitutes a universal principle for determining the coarse-grained description for any statistical system. Moreover, it gives \emph{direct} access to the relevant operators on the lattice.\cite{Gordon2020,rsmine_letter}
	
	Consider a system of classical DOFs in any dimension denoted by a multi-variate random variable $\mathcal{X}$, whose physics is encoded in a probability measure $p(x)$, either Gibbsian, \emph{i.e.} $p(x)\propto e^{-\beta H(x)}$, or a generic non-equilibrium distribution. Here we denote by $x\sim p(x)$ an instance of the random variable $\mathcal{X}$ drawn from the distribution $p(x)$. A coarse-graining rule $\mathcal{X}\to \mathcal{X}'$ is defined as a conditional distribution $p_\Lambda(x'|x)$, determined by a set of parameters $\Lambda$ to be optimised. It is a probabilistic map generating a particular compressed representation of the original DOFs. 
	
	A coarse-graining is typically carried out on disjoint spatial blocks $\mathcal{V}_i \subset \mathcal{X}$, and it factorises:  $p(x'|x)=\prod_i p_{\Lambda_i}(h_i|v_i)$, such that $\mathcal{X}=\bigcup_i \mathcal{V}_i$ and $\mathcal{X}'=\bigcup_i \mathcal{H}_i$, with $p_{\Lambda_i}(h_i|v_i)$ the coarse-graining rule applied to block $i$. If the system is translation invariant a fixed $\Lambda_i \equiv \Lambda$ suffices; otherwise, \emph{e.g.}~in disordered systems, it can be favourable to optimise each block individually.
	
	The RSMI approach identifies coarse-graining filters extracting the most relevant long-range features as the ones retaining the most information shared by a block $\mathcal{V} \subset \mathcal{X}$ to be coarse-grained, and its distant environment $\mathcal{E}$,\cite{Koch-Janusz2018,optimalRSMI} \emph{i.e.}~those that optimally \emph{compress} this information. The environment is separated from $\mathcal{V}$ by a shell of non-zero thickness constituting the buffer $\mathcal{B}$, and forms the remainder of the system (see Fig.~\ref{fig:netfig}.a). The ``shared information" between the random variables $\mathcal{H}$ and $\mathcal{E}$ is formalised by the Shannon mutual information:
	\begin{equation}\label{eq:RSMI_def}
		I_{\Lambda}(\mathcal{H}:\mathcal{E}) = \sum_{h,e}p_\Lambda(e,h) \log\left(\frac{p_\Lambda(e,h)}{p_\Lambda(h)p(e)}\right),
	\end{equation}
	where $p_\Lambda(e,h)$ and $p(h)$ are the marginal probability distributions of $p_\Lambda(h,x)=p_\Lambda(h|v)p(x)$ obtained by summing over the DOFs in $\{\mathcal V$, $\mathcal B\}$ and $\{\mathcal V, \mathcal B, \mathcal E\}$, respectively. The size of the buffer $\mathcal{B}$ sets the RG scale, and acts a filter, only allowing the information about large-scale properties to be compressed into the coarse-grained variables.
	Finding the optimal coarse-graining is thus formulated as a variational principle maximizing $I_\Lambda$ as a function of parameters $\Lambda$.
	
	Maximizing mutual information for high-dimensional random variables is known to be difficult,\cite{PhysRevE.69.066138, PhysRevE.52.6841} which limited the usefulness of the RSMI approach.\cite{Koch-Janusz2018} This challenge can now be overcome with the help of recent ML results combining mathematically rigorous variational bounds on mutual information \cite{doi:10.1002/cpa.3160360204,NIPS2003_2410,5605355} with deep learning.\cite{belghazi2018mine,poole2019variational} In the following section we describe in detail the two components forming the core of the new fast RSMI-NE algorithm: efficient neural MI estimation, and differentiably parametrising the coarse-graining operation. 
	
	\subsection{Differentiable lower-bounds of RSMI}\label{mi_bounds}
	We follow the recent approach of Refs.~\onlinecite{belghazi2018mine, poole2019variational}. Given possibly high-dimensional random variables $\mathcal{X}$, $\mathcal{Y}$, a variational upper or lower bound for $I(\mathcal{X}:\mathcal{Y})$ is constructed, and parametrised by a sufficiently expressive non-linear neural network (NN) \emph{ansatz} $f(x,y)$ modelling the statistical dependence of $\mathcal{X}$ and $\mathcal{Y}$. The weights of the network are updated in an unsupervised learning scheme using the joint samples of $\mathcal{X}$, $\mathcal{Y}$ (in our case: $\mathcal{H}$, $\mathcal{E}$), producing a sequence of differentiable bounds $I_\Lambda$, asymptotically exact.

	It is possible to either minimise a variational upper-bound or to maximise a lower-bound of MI. Given our central aim of maximising RSMI with respect to the parameters $\Lambda$ of some coarse-graining network, we focus on the latter. We mainly use the noise-contrastive lower-bound of MI (InfoNCE), a multi-sample bound characterised by lower variance, but we also review the single-sample bounds it is the extension of. As we shall see, the general form of these bounds is motivated by the interpretation of MI as distinguishing between independently and jointly distributed random variables.
	
	\subsubsection{Single-sample lower-bounds}
	With this motivation in mind, and MI defined as follows:
	\begin{equation*}
		I(\mathcal{X} : \mathcal{Y}) = \mathbb{E}_{p(x,y)}\left[\log \frac{p(x|y)}{p(x)}\right] = \mathbb{E}_{p(x,y)}\left[\log \frac{p(y|x)}{p(y)}\right],
	\end{equation*}
	we introduce the conditional probability distribution $q(x|y)$ as a variational \textit{ansatz} approximating $p(x|y)$. We shall first keep the form of $q(x|y)$ unconstrained, and derive a lower-bound of $I(\mathcal{X} : \mathcal{Y})$ in its terms. Our goal is to find the optimal \textit{ansatz} that makes the bound tight. Subsequently, we explain how the form of $q(x|y)$ can be constrained at the onset to improve the corresponding lower-bound, yielding a more tractable estimator.
	
	Since the Kullback-Leibler (KL) divergence between them:
	\begin{align}
		D_{\rm{KL}}(p(x|y)||q(x|y))& = \mathbb{E}_{p(x|y)}\left[\log \frac{p(x|y)}{q(x|y)}\right]
	\end{align}
	is non-negative, we immediately obtain a lower-bound for $I(\mathcal{X}:\mathcal{Y})$, known as the Barber-Agakov (BA) bound:\cite{NIPS2003_2410}
	\begin{align}
		I(\mathcal{X}:\mathcal{Y})\geq& \mathbb{E}_{p(x,y)}\left[\log \frac{q(x|y)}{p(x)}\right]\\
		&=\mathbb{E}_{p(x,y)}\left[\log q(x|y)\right]+H(\mathcal{X})=:I_{\rm BA}(\mathcal{X}:\mathcal{Y}),\nonumber
	\end{align}
	where $H(\mathcal{X})$ is the entropy of $\mathcal{X}$. This bound is a functional of the \emph{ansatz}: $I_{\rm BA}(\mathcal{X}:\mathcal{Y})=I_{\rm BA}(\mathcal{X}:\mathcal{Y})[q(x|y)]$. Since $D_{\rm KL}=0$ if and only if $q(x|y)=p(x|y)$, the BA bound is tight only when the ansatz $q(x|y)$ equals $p(x|y)$. 
	
	The observation that mutual information measures the correlations between variables motivates the idea that in modelling $p(x|y)$ the \emph{ansatz} $q(x|y)$ should focus on the dependencies between the variables $\mathcal{X}$ and $\mathcal{Y}$. Consider thus the following \emph{ansatz} family:
	\begin{equation}\label{EBA_ap}
		q(x|y):=\frac{p(x)}{Z(y)}e^{f(x,y)},
	\end{equation}
	with $Z(y):=\mathbb{E}_{p(x)}\left[e^{f(x,y)}\right]$. In the above energy-based form, the complex correlations within the possibly high-dimensional data $\mathcal{X}$ are contained in the marginal distribution $p(x)$. The resulting lower-bounds are sensitive mainly to the variables' interdependency. In other words, maximising the lower-bound of MI is rephrased as a search for a  ``critic"\cite{poole2019variational} function $f(x,y)$ modelling the relationships, between $\mathcal{X}$ and $\mathcal{Y}$ very well. The critic function, distinguishing the ``positive" samples from the joint distribution, from the ``negative" ones generated by the product of marginals, will be approximated by a neural network.

	Substituting the energy-based \textit{ansatz} into the BA bound, we obtain the \textit{unnormalised} BA bound (UBA):
	\begin{equation}\label{UBA}
		I_{\rm UBA}(\mathcal{X}:\mathcal{Y}):=\mathbb{E}_{p(x,y)}[f(x,y)] - \mathbb{E}_{p(y)} [\log Z(y)].
	\end{equation}
	By the same arguments as above, the UBA bound is tight when $\frac{p(x)}{Z(y)}\exp{f(x,y)}=p(x|y)$. \footnote{This condition is equivalent to $f(x,y)=\log p(y|x)+c(y)$, because $\exp{f(x,y)}=p(x|y)\frac{Z(y)}{p(x)}=p(y|x)\frac{ Z(y)}{p(y)}$
	with $c(y):=\log Z(y) -\log p(y)$ being a constant in $x$.} Taking advantage of the strict concavity of the $\log$ function, one arrives at the \emph{tractable} version of UBA bound:\cite{poole2019variational}
	\begin{align}
		 I_{\rm TUBA}(\mathcal{X}:\mathcal{Y}):=&\mathbb{E}_{p(x,y)}[f(x,y)] \nonumber\\
		 &- \mathbb{E}_{p(x)p(y)}\left[\frac{e^{f(x,y)}}{a(y)}\right] - \mathbb{E}_{p(y)}\left[\log \frac{a(y)}{e}\right]. \nonumber
	\end{align}
	In several studies, see \emph{e.g.}~Nguyen, Wainwright and Jordan (NWJ) \cite{5605355}, $f$-GAN by Nowozin \textit{et al.}~\cite{nowozin2016fgan} and MINE-$f$ by Belghazi \textit{et al.},\cite{belghazi2018mine} the \textit{baseline} function $a(y)$ is fixed to be the constant $e$. This choice simplifies the TUBA bound:
	\begin{equation}\label{NWJ_bound}
		I_{\rm NWJ}(\mathcal{X}:\mathcal{Y}):=\mathbb{E}_{p(x,y)}[f(x,y)] - e^{-1}\mathbb{E}_{p(x)p(y)}[e^{f(x,y)}] .
	\end{equation}
	Note that in this case $f(x,y)$ should be optimised under the constraint that $q(x|y)$ is normalised. The MINE approach\cite{belghazi2018mine} has also recently been used to estimate entropy in physical systems.~\cite{Nir30234}

	\subsubsection{Replica lower-bounds}
	Despite the improvement due to the energy based \emph{ansatz}, the above ``single-sample" bounds are known to suffer from a large variance.\cite{poole2019variational, oord2018representation} An improved approach is to divide a single batch of samples (\emph{e.g.}~Monte Carlo) for the pair of random variables $(\mathcal{X},\mathcal{Y})$ into minibatches of $K$-fold ``replicated" random variables $(\mathcal{X}_i,\mathcal{Y}_i)_{i=1}^K$, and to derive the corresponding ``multi-sample" lower-bounds (the confusing dual usage of the term ``sample" is standard). These are obtained by taking the average of the single-sample bounds, and address the issue of large variance by means of noise-contrastive estimation (NCE)\cite{899a65b4919f47c8a06d115df85dad11}, first proposed in the context of MI estimation in Ref.~\onlinecite{oord2018representation}. 
	
	A multi-sample bound estimates $I(\mathcal{X}_1,\mathcal{Y})$, where $(\mathcal{X}_1,\mathcal{Y})\sim p(x_1,y)$, 
	given $K-1$ additional independent ``replicas" for one of the random variables, say $\mathcal{X}$ (drawn from the marginal distribution). We denote them by $\mathcal{X}_{2:K}\sim \prod_{j=2}^K p(x_j)$.
	All of the $K$ independent replicas of the random variable $\mathcal{X}$ are treated as a single $K$-dimensional random variable $\mathcal{X}_{1:K}$, and it is easily seen that $I(\mathcal{X}_1:\mathcal{Y}) = I(X_{1:K}:Y)$ since only $\mathcal{X}_1$ was drawn jointly with $Y$. Thus we can apply the ``single-sample" bounds to $I(\mathcal{X}_{1:K}:\mathcal{Y})$.
	
	For example, for the NWJ bound Eq.~(\ref{NWJ_bound}) the optimal \emph{ansatz} for the critic $f$ is given by (see Sec.~\ref{NWJ_extremum} in the Supplemental Material):
	\begin{equation}
		f^*(x_{1:K},y)=1+\log \frac{p(y|x_{1:L})}{p(y)} = 1 + \log\frac{p(y|x_1)}{p(y)}.
	\end{equation}
  This critic function can be made to take advantage of the additional replicas. Observing that by Eq.~(\ref{EBA_ap}):
	\begin{equation*}
		\frac{p(x_1|y)}{p(x_1)}=\frac{e^{f^*(x_1,y)}}{Z(y)},
	\end{equation*} 
	we can take the following modified critic function:
	\begin{align}\label{NCEansatz}
		g(x_1,y) &:= 1 + \log\frac{e^{f(x_1,y)}}{m(y;x_{1:K})}
	\end{align}
	where $m(y;x_{1:K})$ is the $K$ sample Monte Carlo estimate of the partition function $Z(y)$:
	\begin{equation}\label{eq:si_mzy}
		m(y;x_{1:K}) := \frac{1}{K}\sum_{i=1}^Ke^{f(x_i,y)} \approx Z(y).
	\end{equation}
	
	Substituting this critic in the NWJ bound for $I(\mathcal{X}_1:\mathcal{Y})$, and averaging over $K$ replica random variables such that $(\mathcal{X}_i, \mathcal{Y}_i)\sim p(x_i,y_i)$, \emph{i.e.}~using the NWJ bound with each $\mathcal{Y}_j$ playing the role of $\mathcal{Y}$ in turn, we arrive after some simple but tedious algebra at the InfoNCE lower-bound of MI:\cite{poole2019variational}
	\begin{align}\label{InfoNCE_end}
		I(\mathcal{X}:\mathcal{Y})\geq& I_{\rm NCE}(\mathcal{X}:\mathcal{Y}):=\langle I_{\rm NWJ}(\mathcal{X}:\mathcal{Y}) \rangle\\
		&=\frac{1}{K} \mathbb{E}_{\prod_{k=1}^K p(x_k, y_k)}  \left[\sum_{j=1}^K \log \frac{e^{f(x_j,y_j)}}{\frac{1}{K}\sum_{i=1}^{K}e^{f(x_i,y_j)}}\right].\nonumber 
	\end{align}
	This completes derivation of the InfoNCE lower-bound for MI, which is the default one used in our implementation. In Sec.\ref{NCE_upper} in the Supplemental Material we discuss some of its further properties, including the expression in terms of the categorical cross-entropy and conditions on its upper bound, which should be taken into account to avoid biased estimates.
	
	\subsubsection{Neural network architectures for the RSMI lower-bound estimator}
	A key idea which made the variational bounds for MI introduced above computationally relevant was to
	parametrize the critic functions by neural networks $f_{\Theta}$.\cite{belghazi2018mine} Their parameters $\Theta$ can then be optimized using standard methods, \emph{e.g.}~stochastic gradient descent, to maximize the lower bounds.
	
	Multiple multi-layer perceptron (MLP) architectures for the critic function $f\equiv f_\Theta(h,e)$ have been considered.\cite{poole2019variational} Here, we opt for a \textit{separable} form, such that: 
	\begin{equation}
		f_\Theta(h,e)=v^{\rm T}(h)u(e), 
	\end{equation}
	where $v$ and $u$ are array-valued functions (here, neural networks, whose weights constitute $\Theta$) that depend only on hidden variables and the environment, respectively. The networks $v$ and $u$ independently map $\mathcal{H}$ and $\mathcal{E}$ to a so-called embedding space. This choice allows construct the scores matrix $F_{ij}$ (see below), storing the values of $f_\Theta$ for all pairs of jointly and independently drawn samples, in $N$ passes of the MLP ($N$ passes for both $v$ and $u$ networks) for a sample dataset of size $N$. This is in contrast to $N^2$ passes for all $N(N-1)$ independent and $N$ joint samples in a concatenated architecture $f_\Theta(h,e)=f_\Theta([h,e])$.
	We opted for two hidden layers each with 32 neurons fully-connected to the layer containing the $(\mathcal{H},\mathcal{E})$ data. The embedding dimension is 8. The neurons are activated by the rectified linear unit (ReLU) function (see,\textit{ e.g.}~Ref.~\onlinecite{Goodfellow-et-al-2016}). We note that the results of RSMI-NE are not sensitive to these architectural details.

	\subsection{The coarse-graining network}\label{cg_net}
	
	The second key element of the algorithm is the coarse-graining probability distribution $p_\Lambda(h|v)$. To take advantage of the differentiable nature of the RSMI estimators described above, and the possibility of efficient gradient descent training, we consider \emph{ansätze} parametrised by neural networks, as well.
	In particular we use the following composite architecture (see Fig.~\ref{fig:netfig}): 
	\begin{equation}
		h = \tau \circ (\Lambda \cdot v).
	\end{equation}
	Combinations of the local DOFs are selected by an inner product with the parameters $\Lambda$, which can be understood in terms of a generalised Kadanoff block-spin transformation,\cite{PhysicsPhysiqueFizika.2.263} before being mapped to a discrete variable by the map $\tau$. In practice the first operation can be represented by a single layer network with parameters $\Lambda$, and the number of kernels can be varied according to the symmetries of the system. We emphasize that the RSMI approach does not rely on the specific type of the variational \emph{ansatz} for coarse-graining; the inner product form is a choice of convenience here. We briefly discuss the possibility of a more general coarse-graining $\Lambda$ network \emph{ansatz}, comprising multiple layers, in Subsection~\ref{subsection:num_components}.
	
\begin{algorithm*}[]
	\caption{One epoch for the unsupervised learning procedure for the RSMI-net using InfoNCE lower-bound}
	\label{RSMI_alg_training}
	\begin{algorithmic}[1]
		\State $\eta=$ learning rate
		\State $\epsilon=$ relaxation parameter for Gumbel-softmax distribution
		\State $\Theta^0 \leftarrow $ random hyperparameter tensor \Comment{initialise InfoNCE \textit{ansatz} $f(h,e)$}
		\State $\Lambda^0 \leftarrow $ random hyperparameter tensor \Comment{initialise coarse-graining filter}
		\For {$s$ in $1:n$ }\Comment{loop over all $n$ $K$-replica samples for $(\mathcal{V},\mathcal{E})$}
		\State  $\epsilon^s \leftarrow$ reduce Gumbel-softmax relaxation parameter
		\State $\tau^s \leftarrow \tau(\epsilon^s)$ \Comment{Anneal Gumbel-softmax layer}
		\For {$i$ in $1:K$}
		\For {$j$ in $1:K$}
		\State $h_i^s[\Lambda^s] \leftarrow \tau^s (\Lambda^s\cdot v_i^s)$ \Comment{Coarse-grain visible degrees of freedom}
		\State $F_{ij}(\Theta^s, \Lambda^s)\leftarrow f(h_i^s[\Lambda^s],e_j^s;\Theta^s)$ \Comment{$ij$'th element of scores matrix}
		\EndFor
		\EndFor
		\State $Q(x_{1:K},y_{1:K};\Theta^s, \Lambda^s) \leftarrow \sum_{j=1}^K \frac{F_{jj}(\Theta^s, \Lambda^s) }{\sum_{ij=1}^{K}\exp{F_{ij}(\Theta^s, \Lambda^s)}}$ \Comment{InfoNCE "prediction"}
		\State \Comment{Update parameters of the RSMI estimator network:}
		\State $\Delta \Theta^s\leftarrow \eta \nabla_{\Theta}\left[\log Q(\Theta,\Lambda^s)\right]\Big{|}_{\Theta=\Theta^s}$ \Comment{automatic differentiation}
		\State $\Theta^s\leftarrow \Theta^s + \Delta \Theta^s$ \Comment{stochastic gradient-ascent}
		\State \Comment{Update parameters of the coarse-grainer network:}
		\State $\Delta \Lambda^s\leftarrow \eta \nabla_{\Lambda}\left[\log Q(\Theta^s, \Lambda)\right]\Big{|}_{\Lambda=\Lambda^s}$
		\State $\Lambda^s\leftarrow \Lambda^s + \Delta \Lambda^s$ 
		\EndFor
		\State $\tilde{I}_{\Lambda}(\mathcal{H}:\mathcal{E})=\frac{1}{n}\sum_{t=1}^n \log Q(x_{1:K},y_{1:K};\Theta^t,\Lambda^t) + \log K$ \Comment{average over $n$ samples}
		\State\Return  $\tilde{I}_{\Lambda}(\mathcal{H}:\mathcal{E})$, $\Lambda^{n}$
	\end{algorithmic}
\end{algorithm*}
	
	The final step is a non-linear stochastic mapping $\tau$ into a state $h$ of the coarse-grained variable with a pre-determined type (\emph{e.g.}~binary spins). This embedding is both crucial,\cite{PhysRevX.10.031056} and algorithmically non-trivial, as the discretisation operation needs to be differentiable.\cite{jang2016categorical}

	\subsubsection{Gumbel-softmax reparametrisation trick for discretisation of coarse-grained variables}
	In the RSMI-NE the coarse-grained variables $h$ are inputs to the MI estimator. Since the value of MI depends on what kind of distribution $h$ belongs to, we need to ensure that this estimation step is not falsified by \emph{e.g.}~neglecting to force the output of the coarse-grainer into a discrete binary variable form, rather than a real number, if we decided $h$ to be Ising spins. The apparent problem is that generating stochastic discrete $h$ seems to spoil the differentiability of the whole setup. This is in fact somewhat similar to the problem encountered in variational autoencoders (VAEs), which is solved there using the so-called \emph{reparametrization trick}, effectively allowing to only differentiate w.r.t.~to the parameters of the latent space probability distribution. With this intuition in mind, we discuss the solution to the issue in RSMI-NE.\cite{NIPS2014_5449}
	
	The solution has three steps. The first result needed is the \emph{Gumbel-max reparametrisation}: let $h$ be a categorical random variable\index{categorical random variable} which can be in one of the states $\{i\}_{i=1}^N$ with the set of probabilities $\{\pi_i\}_{i=1}^N$. It can be shown that:
	\begin{equation}\label{gumbel_max}
		k^*=\argmax_{k\in\{1:N\}}\left\{g_i + \log \pi_i\right\}_{i=1}^N 
	\end{equation}
	is a categorical random sample drawn from the distribution defined by $\{\pi_i\}_{i=1}^N$, where $\{g_i\}_{i=1}^N$ are $N$ \textit{parameterless} random variables drawn from the Gumbel distribution \index{Gumbel distribution} \cite{gumbel1954,NIPS2014_5449} centred at the origin. All the parametric dependence is therefore in the constants $\{\pi_i\}_{i=1}^N$, separated from the source of randomness, which in principle allows differentiation of the distribution. 
	
	Since $\argmax$ is not differentiable itself, in the second step it is \emph{smoothened}, in a controlled and reversible fashion. Given $\{g_i\}_{i=1}^N$ we define a vector-valued random variable utilizing the softmax function Eq.~(\ref{eq:si_softmax}), whose $j$-th component takes the form:
	\begin{equation}
		{\rm softmax}_{j,\epsilon}\left(\{g_i + \log \pi_i\}_{i=1}^N\right)=\frac{\exp\left[(\log \pi_j + g_j)/\epsilon\right]}{\sum_{i=1}^N\exp\left[(\log \pi_i + g_i)/\epsilon\right]},
	\end{equation}
	where $\epsilon$ is the smearing parameter. For $\epsilon \rightarrow 0$ the softmax becomes the argmax, mapping the argument vector $y=\{g_i + \log \pi_i\}_{i=1}^N$ into a $N$-component one-hot vector (one-hot encoding maps each of $N$ possible states $i$ of a discrete variable into a $N$-dimensional vector, with $1$ on $i$-th position, and zeros elsewhere) with some $k^*$-th entry taking the value $1$, thereby marking $y_{k^*}=\max y$. 
	The result is a Gumbel-softmax random variable;\cite{NIPS2014_5449} approximately (or pseudo-) discrete, for small enough $\epsilon$ (do not confuse with a discrete random variable defined by taking the maximum component of the softmax function). 
	For $\epsilon\approx0$, a sample vector $h\sim{\rm softmax}_\epsilon(\{g_i + \log \pi_i\}_{i=1}^N)$ has a single component very close to $1$ and all other components take very small values, comparable to machine precision. Conversely, in the limit $\epsilon\to \infty$ the distribution becomes uniform over all components.
	
	\begin{figure*}
		\centering
		\includegraphics[width=1\linewidth]{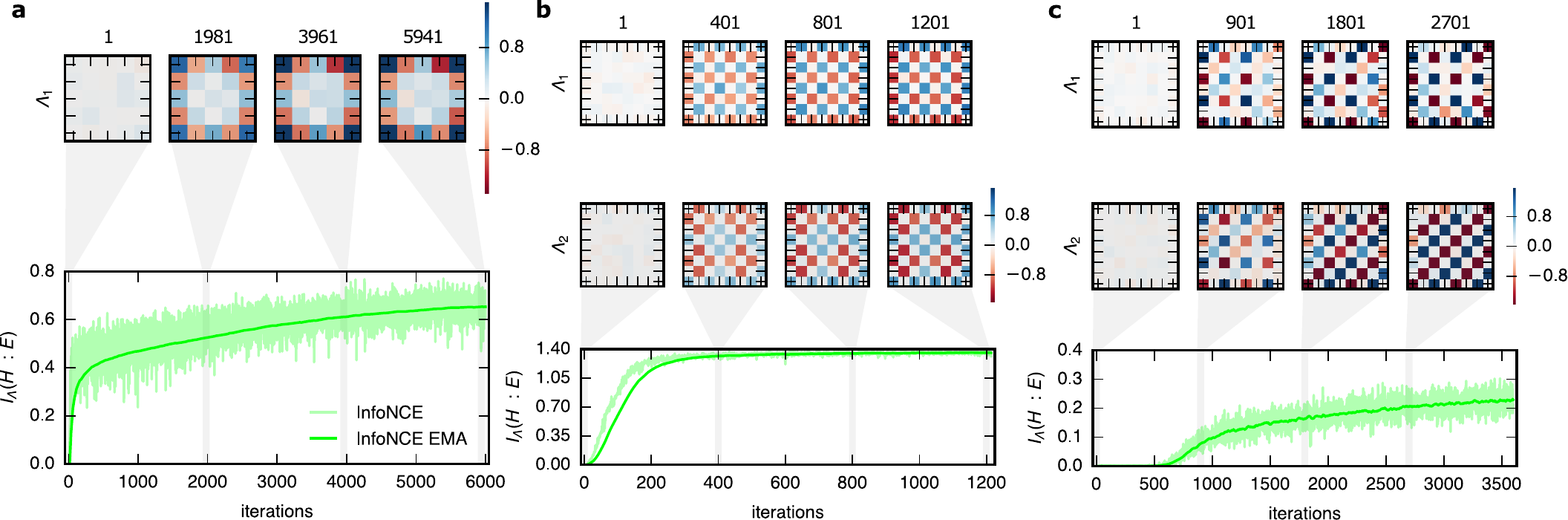}
		\caption{{\bf Convergence of real-space mutual information value and the coarse-graining filters.} The light green curve shows the the time series of RSMI, and dark green its exponential moving average. In the top panel, the time series of the coarse-graining filters are given. {\bf a} The antiferromagnetic Ising model on a 2D square lattice at the critical point.  The RSMI converges to $\log2$ and the optimal filter couples to the boundary degrees of freedom in $\mathcal{V}$ with an alternating sign pattern, due to the onset of anti-ferromagnetic order. {\bf b} The same for the interacting dimer model at $T=0.4<T_{\ssm BKT}$ {\bf c} For the interacting dimer model at $T=15.0\gg T_{\ssm BKT}$. See Sec. \ref{equilibrium} for details on the Ising and dimer models, and the interpretation of these results.}
		\label{fig:rsmi_conv}
	\end{figure*}	
		
	This is used in the third step, where we anneal the smoothing parameter. There is a trade-off between small $\epsilon$ which leads to very noisy gradient estimates, and large $\epsilon$ at which the gradients have low variance but the samples $h$ are far from being discrete. To reconcile this, we start the training at a high value of $\epsilon$ and anneal it exponentially towards a small positive value during training and thus stiffen the pseudo-discrete variable into an increasingly better approximation of a discrete one. The annealing procedure is described in more detail in Sec.~\ref{GS_annealing} in the Supplemental Material.

	\subsection{Unsupervised learning scheme for the combined network}\label{RSMI-net_training}
	
	The results of the preceding section enable us to construct a variational \textit{ansatz} $\tilde{I}_{\Lambda, \Theta}(\mathcal{H}:\mathcal{E})$, differentiable with respect to the parameters of the coarse-graining filter $\Lambda$ and the estimator $\Theta$. We stress that it is upper-bounded by the exact value of RSMI:
	\begin{equation}
		\max_\Lambda  \tilde{I}_{\Lambda, \Theta}(\mathcal{H}:\mathcal{E}) \leq 	I_{\Lambda^*}(\mathcal{H}:\mathcal{E}), \hspace{0.5 cm} \forall \Theta,
	\end{equation}
	where $\Lambda^*$ stands for the optimal solution. The equality holds if and only if the estimator becomes exact, \emph{i.e.} for the optimal parameters $\Theta=\Theta^*$ of the energy based \emph{ansatz} $f$ of InfoNCE. Thus, the search for the optimal RSMI coarse-graining, for any tuning parameters of the system, becomes a well-defined and tractable variational problem. It can be solved by simultaneously optimising both set of trainable parameters $\{\Lambda, \Theta\}$ towards the same objective in an unsupervised learning scheme, which we now describe.
	
	\begin{figure*}[]
		\centering
		\includegraphics[width=1\linewidth]{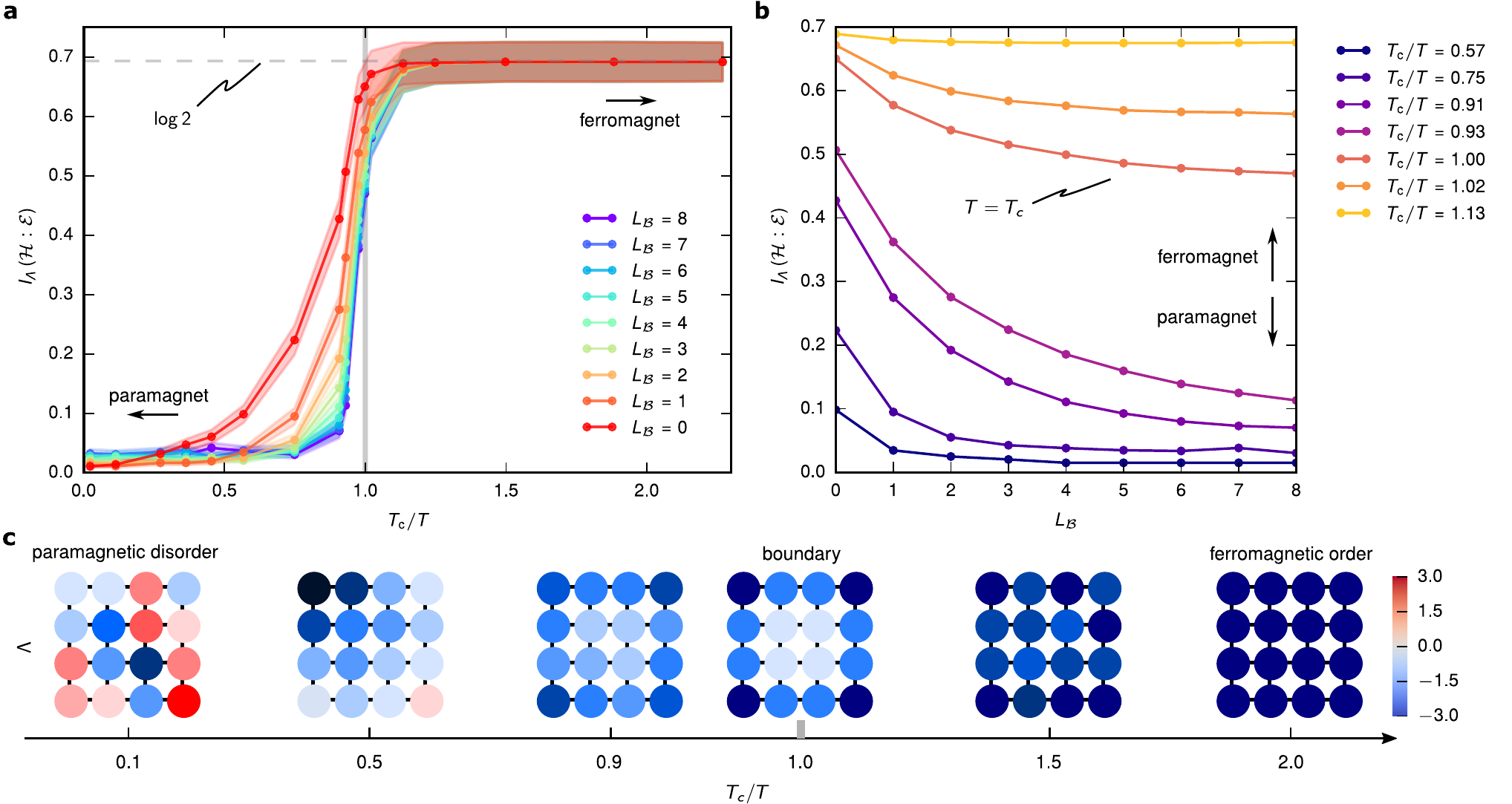}
		\caption{{\bf Maximal RSMI as a function of temperature and its scaling with $L_\mathcal{B}$.} {\bf a} The dependence of the maximal RSMI on temperature for different buffer widths $L_{\mathcal B}$. {\bf b} The scaling of the maximal RSMI with $L_{\mathcal B}$ at different temperatures. It is found that the RSMI decays exponentially in the paramagnetic phase, whereas the decay is slower at $T\leq T_c$. {\bf c} The evolution of the RSMI-optimal filters with temperature at $L_{\mathcal B}=4$.}
		\label{fig:isingrsmi}
	\end{figure*}
	
	The inputs of the RSMI-NE can be \emph{e.g.}~the Monte Carlo (MC) samples from the desired model, for example as in Sec.~\ref{equilibrium}, but the algorithm can also be run on measured data. Since we estimate RSMI using the InfoNCE bound, the sampling is divided into mini-batches, each containing $K$ samples. We separate in each sample the visible patch $\mathcal{V}$ and its environment $\mathcal{E}$, dismissing a finite buffer that separates them. Then a single mini-batch is denoted by the multi-dimensional random variable $(v_{1:K},e_{1:K})=(v_1,\cdots,v_K,e_1,\cdots,e_K)$. As usual, ensuring good quality MC sampling is important.
	
	Let $\Lambda^s$ and $\Theta^s$ denote the network parameters for the coarse-graining, and the critic $f$, respectively, at training step $s$. We initialise them as tensors containing random numbers. At each step $s$, in the samples in the mini-batch $v_i$ are coarse-grained into $h_i[\Lambda^s]$ and the scores matrix $F_{ij}(\Theta^s, \Lambda^s)=f(h_i[\Lambda^s], e_j;\Theta^s)$ is computed for the InfoNCE at current values of the network parameters. In $F_{ij}$ the entries with $i=j$ denote the jointly drawn samples and the rest denote independently drawn samples for the coarse-grained degree of freedom and the environment.
	As described above discrete $h$ are generated by a layer $\tau$.
	
	The InfoNCE prediction [that of $p(h,e)$ being equal to $p(h)p(e)$ or not, as defined in Eq.~(\ref{NCE_expectant})] for the mini-batch is computed using the scores matrix as:
	\begin{equation}\label{NCE_RSMI_prediction}
		Q(h_{1:K},e_{1:K};\Theta^s, \Lambda^s) = \sum_{j=1}^K \frac{\exp F_{jj}(\Theta^s, \Lambda^s) }{\sum_{i=1}^{K}\exp{F_{ij}(\Theta^s, \Lambda^s)}}.
	\end{equation}
	Then $\log Q(h_{1:K},e_{1:K};\Theta^s, \Lambda^s) + \log K$
	gives our single mini-batch estimate of RSMI.
	
	The gradients of the mini-batch estimate of RSMI with respect to $\Lambda$ and $\Theta$ are used to update the network parameters. We use the adam optimiser\cite{kingma2014adam} to perform stochastic gradient-ascent. We found that using the same learning rate for both $\Lambda$ and $\Theta$ leads to efficient training. We repeat the above over all mini-batches, until all samples are fed to the network once. This constitutes one epoch of training. In Alg.~\ref{RSMI_alg_training} the training procedure is given in pseudo-code.
	
	We train for multiple epochs until convergence criteria are satisfied (see Sec.~\ref{convergence} in the Supplemental Material). For illustration, we plot in Fig.~\ref{fig:rsmi_conv} the time series of the RSMI estimates and the coarse-graining filters during the training for 2D critical Ising anti-ferromagnet, and interacting dimer models below and above the BKT transition point (see Sec.~\ref{equilibrium}). 
	Upon convergence, we are left with an optimised coarse-graining represented by the final $\Lambda$-parameters, and an estimate of the RSMI given by a moving average of the time-series of mini-batch estimates. 
	
	\section{RSMI-NE in equilibrium systems}\label{equilibrium}	
	The RSMI-NE algorithm yields a comprehensive characterization of long-distance properties of an equilibrium statistical system: its phase diagram, correlations, symmetries. The companion work\cite{rsmine_letter} discusses in detail the construction of order parameters, or, more generally, relevant operators. Here we demonstrate how the extracted quantities, and their dependence on the tuning parameters of the system and the buffer length-scale, reveal the critical points and the nature of correlations in the phases. We illustrate this on the examples of a dimer model with aligning interactions and the 2D Ising model. We examine the information, particularly on symmetries, also \emph{emergent}, contained in the statistical \emph{ensemble} of coarse-graining filters, and show its retrieval with ML techniques, which is of practical importance when faced with incomplete inputs. 
	
	We emphasize that, in contrast to many applications of ML (see Refs.\onlinecite{RevModPhys.91.045002,doi:10.1080/23746149.2020.1797528} for a recent review), the success of RSMI-NE in extracting physical data \emph{is not} serendipitous or due to a particular choice of architecture, but a consequence of RSMI being a well-defined \emph{physical quantity},\cite{Gordon2020}, which the ML methods used approximate numerically.

	\subsection{The phase diagram from the parameter dependence of RSMI  and its scaling with buffer size}
	Real-space mutual information quantifies the totality of spatial correlations in the system, and thus their changing structure, especially due to phase transitions, should be reflected in its value. This is indeed the case, as shown below. The nature of these correlations (power-law vs.~exponential) further determines the decay properties of RSMI as a function of the length scale set by the buffer width.
	
	
	We use the example of the classical 2D Ising model: 
	\begin{equation}
		K[x=\{x_i\}]=\beta J \sum_{\langle i ,j\rangle} x_i x_j,
	\end{equation}
	with $x_i=\pm 1$, as the simplest test case for RSMI-NE. It undergoes a second order phase transition between a ferromagnetic for $J<0$ or anti-ferromagnetic order for $J>0$, and a disordered paramagnetic phase at inverse temperature $\beta =\ln(1+\sqrt{2})/2\approx 0.44$.\cite{PhysRev.65.117} We investigate this model in the temperature range $T_c/T\in [0,2.5]$ by optimising RSMI at buffer widths $L_{\mathcal B}\in[0,8]$. 
	
	As shown in Fig.~\ref{fig:isingrsmi}.a, the temperature dependence of the maximal information $I_\Lambda(T)$, \emph{i.e.}~the amount of long-range information attained \emph{with the (at given T) optimal} $\Lambda$, is a clear indicator of the second order phase transition, and of the existence of two phases. At $T<T_c$, independent of the buffer width, exactly 1 bit of information is recovered. This precise quantization is due to RSMI effectively counting the (two) segregated phase space sectors corresponding to the ferromagnetic ground states, and reveals the long-range order.
	
	Phase transitions are reflected by non-analyticities in $I_{\Lambda}(T)$ (\emph{cf.} the behaviour of the mutual information in the absence of buffer in Refs.~\onlinecite{Wilms_2011, PhysRevE.87.022128}). At $T=T_c$ we find that the RSMI has a step-like decay which becomes sharper at larger buffer width $L_{\mathcal B}$. At larger temperatures, the long-range order is destroyed by the thermal fluctuations. The short-range nature of the paramagnetic phase results in an exponential decay of the RSMI with $L_{\mathcal B}$, see Figs.~\ref{fig:isingrsmi}.b and \ref{fig:freedimerrsmiscaling}.a. This is to be contrasted with the critical phase of the dimer model with power-law correlations, where the maximal information decays only algebraically with $L_{\mathcal B}$, see Fig.~\ref{fig:freedimerrsmiscaling}.b.
	
	We next turn to the more complex example of the interacting dimer model, defined by the partition function:
	\begin{equation}\label{eq:dim_part}
		Z(T)=\sum_{\{C\}}\exp{(-E_C/T)},
	\end{equation}
	with $T$ the temperature and $C$ denoting dimer configurations on the square lattice obeying the constraint of exactly one dimer at every vertex, see Fig.~\ref{fig:dimergs}.a. The energy $E_C=N_C(||)+N_C(=)$ counts plaquettes covered by parallel dimers favoured by the interaction.
	
	The essence of this system is in the interplay of aligning interaction energy and entropic effects due to the non-local cooperation of local dimer covering constraints. At low-$T$, the former facilitates long-range order (LRO), crystallizing the system into one of four translation symmetry breaking \textit{columnar} states, see Fig.~\ref{fig:dimergs}.b. With increasing $T$ the system undergoes a Berezinskii-Kosterlitz-Thouless (BKT) transition at $T_{\ssm BKT}= 0.65(1)$,\cite{PhysRevE.74.041124} entering a critical phase characterised by algebraic decay of correlations with exponents continuously changing with $T$. The effective theory of the system is given by a sine-Gordon field theory.\cite{PhysRevLett.94.235702,fradkin_2013, PhysRevE.74.041124} In particular, for $T\to \infty$ the aligning interactions are irrelevant and this description reduces to a free Gaussian field theory.
	
	\begin{figure}
		\centering
		\includegraphics[width=1\linewidth]{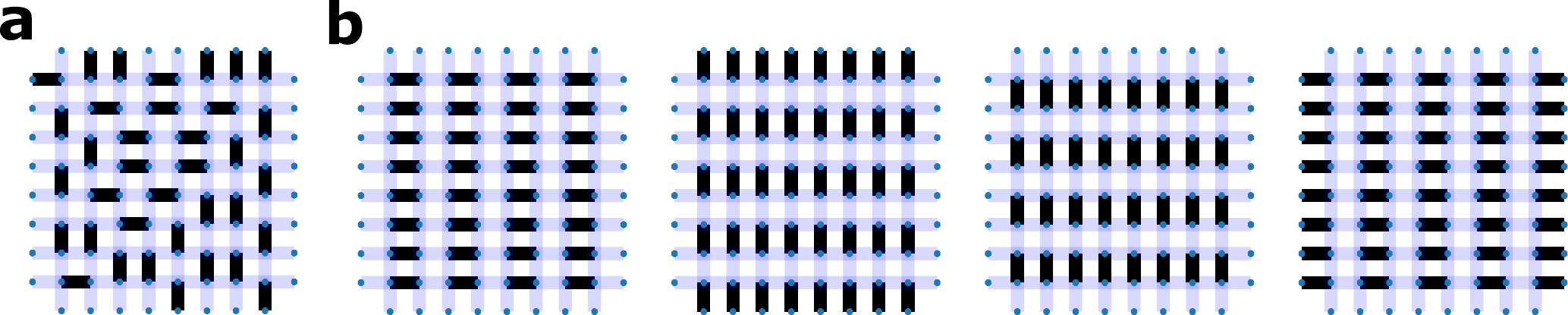}
		\caption{{\bf a} A generic valid dimer covering on the square lattice. {\bf b} The four ground states of the dimer model with aligning nearest neighbour interactions break ${\textsf C}_4$ and lattice translation symmetries.}
		\label{fig:dimergs}
	\end{figure}

	To test our method on the dimer model, we generate its Monte Carlo samples across the whole temperature range, using the directed loop algorithm\cite{PhysRevE.74.041124} (see Sec.~\ref{directedloopMC} in the Supplemental Material for implementation details) for $64\times64$ systems. These are used as inputs to RSMI-NE.
	We restrict the coarse-grained variables $\mathcal{H}$ to a two-component binary vector $\{\pm 1,\pm 1\}$, a choice suggested by the systematic procedure in Sec.\ref{equilibrium}D. Hence, we are looking for a two-component vector of filters $\Lambda_1$,  $\Lambda_2$ determining how the visible region $\mathcal V$ is mapped onto $\mathcal H$. 
	
	\begin{figure}
		\centering
		\includegraphics[width=1\linewidth]{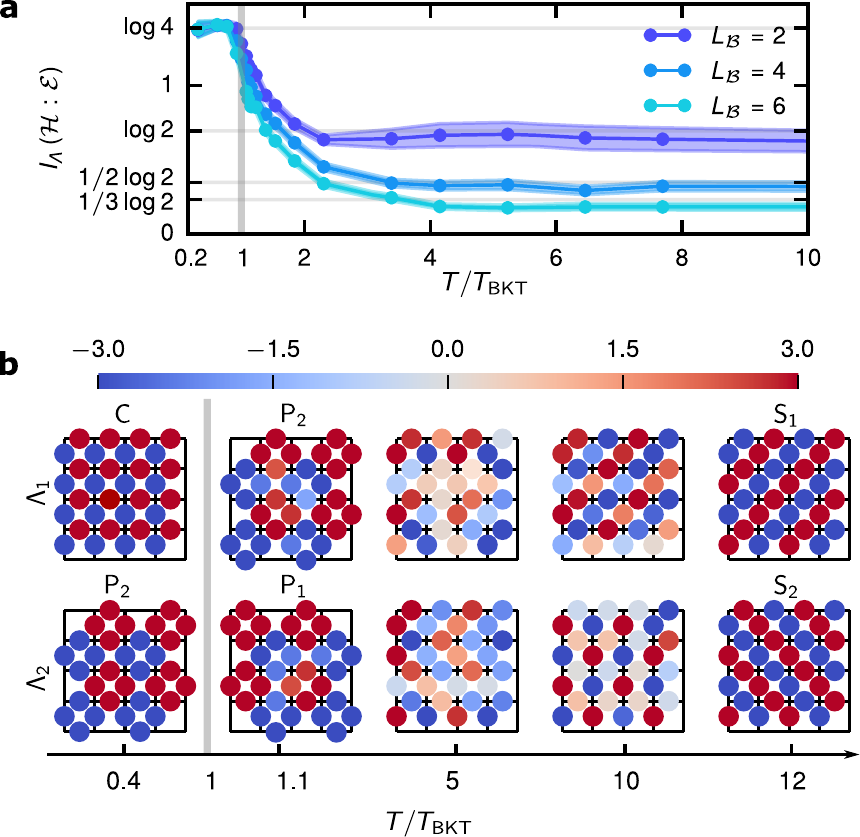}
		\caption{{\bf RSMI analysis of the interacting dimer model.} {\bf a} Total RSMI extracted with the optimal filters as a function of $T$ and its scaling with the buffer size. {\bf b} Samples of optimal filters obtained with RSMI-NE for different $T$ [columnar $({\textsf C})$, plaquette $({\textsf P}_1, {\textsf P}_2)$ and staggered $({\textsf S}_1, {\textsf S}_2)$].}
		\label{fig:dimerfig}
	\end{figure}

	Though the BKT transition is of entirely different nature to the Ising example considered above,
	we find that optimizing the filters $\Lambda_1$,  $\Lambda_2$ for all $T$ readily reveals the structure of the phase diagram (see Fig.~\ref{fig:dimerfig}.a). To wit, for $T<T_{\ssm BKT}$ its value is constant and equal to $\log 4$, or $2$ bits. The information shared between distant parts of the system in the ordered phase is precisely which of the four columnar states they are in. This is analogous to the ferromagnetic order of the Ising model, \emph{i.e.}~the optimal RSMI counts the number of segregated phase space sectors in long-range ordered phases. Moreover, the algebraic decay of $I_{\Lambda}(T)$ with the buffer size for $T>T_{\rm BKT}$, as seen in Fig.~\ref{fig:freedimerrsmiscaling}, is indicative of a critical phase with power-law decaying correlations. In particular, we have found that the RSMI scales as $I_\Lambda(T\to\infty)\sim L_\mathcal{B}^{-\upsilon}$, with the exponent obtained from the best fit (see Fig.~\ref{fig:freedimerrsmiscaling}.c) to be $\upsilon\approx 1.16$.
	
	\begin{figure}
		\centering
		\includegraphics[width=1\linewidth]{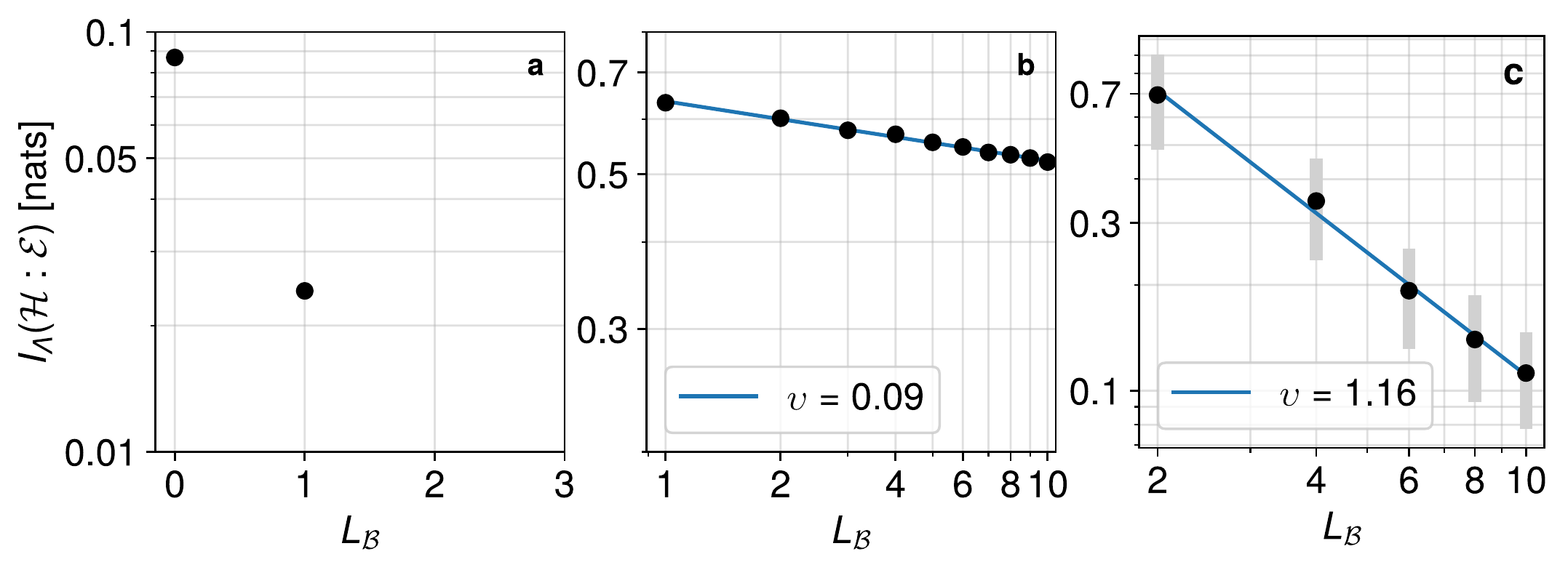}
		\caption{ {\bf The scaling of RSMI with the buffer size $L_\mathcal{B}$.}  {\bf a} The RSMI decays exponentially in the Ising paramagnet (at $T=4$). For $L_{\mathcal B}\geq 2$, the RSMI numerically drops to $0$. {\bf b} At the Ising critical point, the RSMI scales algebraically; the line gives the best fit for the exponent $\upsilon\approx 0.09$.  {\bf c} Same as in {\bf b} but for the high-temperature free dimers ($T=15$). The fit gives the RSMI scaling exponent $\upsilon\approx 1.16$ for the free dimers.}
		\label{fig:freedimerrsmiscaling}
	\end{figure}

	We conclude that the value of the optimal RSMI as a function of the system's parameters provides us with information about position of the critical points, type of phase transition, the nature of correlation decay in the phases, as well as the number of sectors in the long-range ordered phases.
	
	\subsection{Correlations from the parameter dependence and flow of the optimal coarse-graining filters}
	
	Much more can be learnt about spatial correlations upon examining the coarse-graining filters $\Lambda(T)$. First, the optimal filters, with which the highest RSMI value was attained, themselves depend on the tuning parameters of the physical system, and in fact carry the information about the phase diagram. Particularly, they reflect the symmetries of the system (see also Sec.\ref{equilibrium}C). Moreover, the filters depend on the length scale $L_{\mathcal B}$, reflecting an RG flow. In Ref.~\onlinecite{rsmine_letter} we further show that they correspond to the (lattice discretisation of) relevant operators in the field theory describing the system, in light of which observation the intriguing results of this subsection become natural.

	\subsubsection{The optimal coarse-graining filters of the 2D Ising model}
	
	The temperature dependence and the relation of the optimal filters to the phase diagram are clear in the Ising model results, as seen in Fig.~\ref{fig:isingrsmi}.c. Here we used a fixed buffer width $L_{\mathcal B}=4$, a visible region of size $4\times 4$ and a $64\times 64$ grid for the whole system. 
	
	In the high- and low-$T$ limits the paramagnetic and ferromagnetic phases result in optimal filters, which are respectively random and uniform. The uniform filter acts as the ferromagnetic order parameter, labelling the configurations by their magnetisation. Consistently, for the antiferromagnetic Ising model, we found that the optimal filter at low-$T$ is the staggered magnetisation (see Fig.~\ref{fig:rsmi_conv}). At the second-order critical point, the filters exhibit a boundary behaviour,\cite{Koch-Janusz2018} \emph{i.e.}~they correspond to the magnetization on the boundary of the visible block. This is because in the \emph{critical} Ising system the information shared between $\mathcal{V}$ and $\mathcal{E}$ is proportional to the surface of their interface, and not to the volume of $\mathcal{V}$.\cite{Wilms_2011, PhysRevLett.100.070502} The boundary filter thus signals the presence of the critical point.

	\begin{figure}[]
		\centering
		\includegraphics[width=\linewidth]{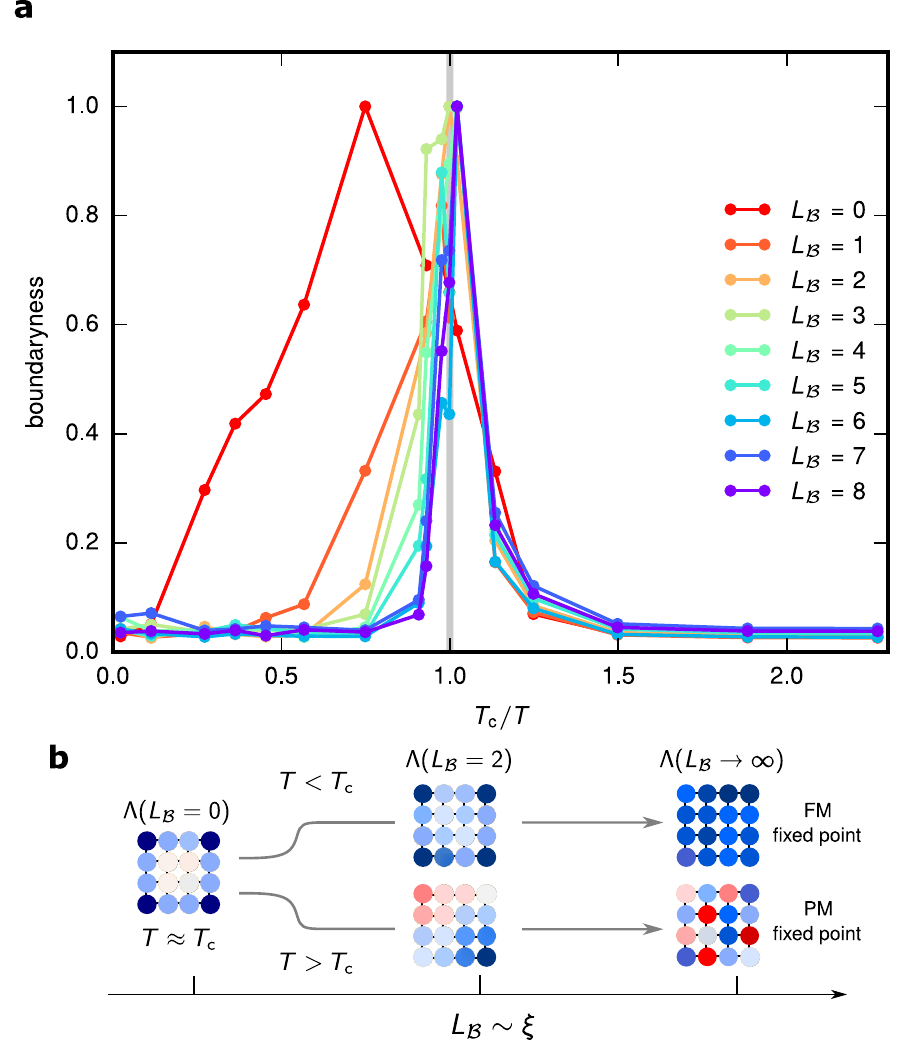}
		\caption{{\bf Relative strength of coupling to the boundary \emph{vs.}~bulk spins in $\mathcal{V}$, as a function of temperature.}  {\bf a} The empirical ``boundaryness" measure (see Eq.~\ref{eq:boundaryness}) of the optimal coarse-graining filter at different temperatures and buffer sizes. {\bf b}  Near the critical point $T_c$ the optimal $\Lambda$ averages the boundary spins in $\mathcal{V}$. The separation $L_\mathcal{B}$ of $\mathcal{V}$ from its environment $\mathcal{E}$ effectively sets the RG scale $\xi$. Growing $L_\mathcal{B}$ increasingly differentiates the filters $\Lambda$ slightly below and above $T_c$, which ultimately flow to the ferromagnetic (FM) and paramagnetic (PM) fixed points. Note the transformations are \emph{not} iterated. 
		}
		\label{fig:isingboundaryness}
	\end{figure}
	
	While the occurrence of the boundary filter is associated with the critical $T_c$, the range of temperatures where this happens depends on $L_{\mathcal B}$. Put differently, the accuracy to which the critical fixed point can be resolved depends on the length scale set by the size of the buffer. To visualize this in Fig.~\ref{fig:isingboundaryness}.a we plot an empirical measure of the relative strength of the boundary vs.~bulk couplings in $\Lambda(T)$, at different values of $L_{\mathcal B}$, which we call  the ``boundaryness":
	\begin{equation}\label{eq:boundaryness}
		{\rm boundaryness} := \frac{|\sum_{i \in {\rm boundary}}\Lambda_i|}{|\sum_{i \in {\rm bulk}} \Lambda_i|},
	\end{equation}
	and we rescale by the maximum over $T$ in Fig.~\ref{fig:isingboundaryness}.a for clarity (note that the maximum value was found approximately the same for all $L_\mathcal{B}$ except for $0$, where it was much greater).
	This ratio peaks around $T=T_c$, becoming increasingly sharp as $L_{\mathcal B}$ grows. 
	
	This behaviour is readily understood, since $L_{\mathcal B}$ (together with the total finite system size, \emph{cf.}~Eq.5 in Ref.\onlinecite{Gordon2020}) effectively controls the RG scale. Indeed, a corresponding flow of the optimal filters can be constructed. As shown in Fig.~\ref{fig:isingboundaryness}.b, for small $L_{\mathcal B}$ at $T\approx T_c$ the optimal filter is a boundary one, both above or below $T_c$. At this scale the critical point is not resolved very well. 
	As $L_{\mathcal B}$ is increased, however, the $T < T_c$ and $T > T_c$ cases are increasingly differentiated, and they eventually flow to the ferromagnetic and paramagnetic fixed points, respectively. This, of course, is consistent with the presence of a repulsive fixed-point in the RG flow of the 2D Ising model.

	\subsubsection{The optimal coarse-graining filters of the interacting dimer model}
	
	The optimal coarse-grainings of the dimer model Eq.~(\ref{eq:dim_part}) likewise
	depend on the tuning parameters of the system (\emph{i.e.}~the temperature), see Fig.~\ref{fig:dimerfig}.b. In contrast to the 2D Ising model example, however, this dependence is continuous for $T > T_{\rm BKT}$. This, in fact, provides another indication that the transition is of the BKT type (in addition to the algebraic scaling of the RSMI curve in the critical phase).
	
	More concretely, in the high- and low-$T$ limits, three classes of filters emerge: independent optimizations (see discussion of the filter \emph{ensemble} in Sec.\ref{equilibrium}C) return exclusively sets of filters $\Lambda_{1,2}$ that correspond to columnar and plaquette at low temperatures, and staggered ones at high temperatures. They are denoted as $\textsf C$, ${\textsf P}_{1,2}$ and ${\textsf S}_{1,2}$ in Fig.~\ref{fig:dimerfig}.b. We call these filters ``pristine'' as they reflect limiting cases. They also reveal information about the symmetries. In particular the pristine plaquette and columnar filters at $T\to 0$ break the discrete translation or rotation symmetry of the lattice, respectively. Any pair of $\Lambda_{1,2}$ drawn out of these classes of filters defines a bijection between the four columnar states in Fig.~\ref{fig:dimergs} and the four distinct states $(\pm 1,\pm 1)$ of the compressed degrees of freedom in $\mathcal H$. They thus label uniquely the ordered states, (which is the reason the recovered RSMI is exactly $2$ bits for $T<T_{\rm BKT}$), 
	and correspond precisely to the dimer symmetry breaking order parameter of Ref.~\onlinecite{PhysRevE.74.041124}. In Ref.~\onlinecite{rsmine_letter} we show the columnar and plaquette filters correspond to the electrical charge operators of the sine-Gordon field theory, \emph{i.e.}~the operators with the lowest scaling dimensions, and so the most relevant in the RG sense.
	
	 The degeneracy of plaquette ${\textsf P}_{1,2}$ and columnar ${\textsf C}$ filters in their RSMI value is lifted when the rotation symmetry is restored at BKT transition: the pristine columnar filter, which breaks the lattice rotation symmetry explicitly, is not found above $T_{\ssm BKT}$. 
	 
	 In the limit of $T\to \infty$ the optimal filers are the staggered ${\textsf S}_{1,2}$. These can be shown\cite{rsmine_letter} to exactly correspond to the spatial gradients of the height field in the sine-Gordon description of the system, or equivalently to the electrical fields. At $T\to \infty$ these are in fact the only terms in the field theory, which is then  that of a free Gaussian field.

	In the critical phase $T>T_{\ssm BKT}$, where the system is characterized by power-law correlations with temperature-dependent exponents, the resulting filters continuously interpolate between pristine plaquette and staggered ones. This is due to the competition between the electric field operator and plaquette correlations, or in other words the gradient and the cosine terms in the sine-Gordon field theory.\cite{fradkin_2013, PhysRevE.74.041124} In a finite system the correlations due to these operators of slightly differing (for $T \gtrapprox T_{\rm BKT}$) scaling dimensions both contribute to RSMI, though in the thermodynamic limit the more relevant gradient term would dominate (which indeed happens for larger $T$).

	\subsection{The ensemble of coarse-graining filters and its analysis: operators and symmetries}
	\subsubsection{Extracting the pristine filters (lattice operators).}
	\begin{figure}
		\centering
		\includegraphics[width=1\linewidth]{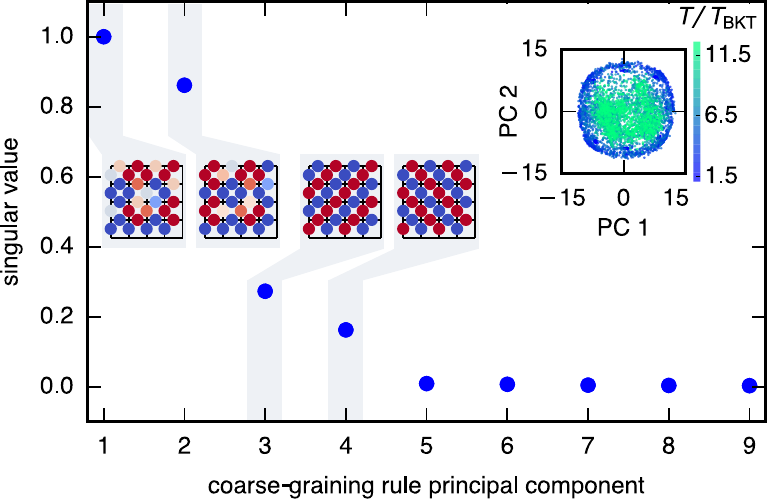}
		\caption{{\bf Analysis of the ensemble of coarse-graining rules.} PCA spectrum of the ensemble of filters $\Lambda(T)$ for a restricted temperature window $0.7 < T< 3.7$ above the BKT point. Top-right inset: projection of the full ensemble on the two highest PCA components. The overlap with those ``plaquette" filters falls with $T$.}
		\label{fig:fig3wblob}
	\end{figure}

	The above described mixing of the pristine filters in intermediate parameter regimes and in finite-size systems may seem troublesome, but we find it can in fact be resolved, and the solution to this problem is useful in itself. 
	
	The key observation is that due to the RSMI-NE being a stochastic algorithm it produces in independent runs a \emph{distribution} of RSMI-optimal transformations (thus Fig.~\ref{fig:dimerfig}.b shows a sample of filters at each $T$). This distribution defines \emph{the ensemble of filters}, a novel concept we introduce.
	
	The ensemble contains physical information, particularly about the symmetries, and, in contrast to individual filters, also \emph{emergent} ones (see below). 
	Crucially, the ensemble also allows to address the problem of filter mixing due to competing correlations. The pristine filters, which correspond to the (lattice representation of) scaling operators,\cite{rsmine_letter} can be identified not only at the limiting temperatures, but also through data analysis of the ensemble in a window of intermediate temperatures.

	To show this we perform a principal component analysis \index{principal component analysis} (PCA) \emph{of the ensemble} of RSMI-optimal filters for an intermediate temperature range $0.7 < T< 3.7$ above the BKT transition, where the pristine components do not explicitly appear. The goal is to find the most distinctive features of the ensemble which vary with the changing system parameters controlling the location in the phase diagram (here: temperature), while filtering out variations due to a specific realisation of statistical noise or the random initial conditions of the training. As we coarse-grain the dimer model using $8\times 8$ two-component filters, we consider a 64-dimensional vector space where each coarse-graining filter component is a point. The input to the PCA consists of the ensemble of coarse-graining filters, flattened into 1D arrays. The resulting principal components are reshaped back into $8\times8$ arrays, so that each defines a coarse-graining. To visualize the results we then project the full space of coarse-graining filters onto the hyperplane given by the most important principal components.

	The results are shown in Fig.~\ref{fig:fig3wblob}. An important observation is that the highest principal components are in fact given by the pristine filters. This justifies describing the filters in intermediate regimes as ``mixtures", as suggested by the intuitive physical picture of the competing correlations. We can thus identify the relevant operators, \emph{i.e.}~the plaquette (electric charge) and staggered (electric field) filters while never seeing data from parameter regimes where they entirely dominate. This is important, as MC simulations may be costly, or we may be dealing with experimental data whose range we do not have full control over.

	\subsubsection{Discovering broken and emergent symmetries.}
	
		\begin{figure*}[ht]
		\includegraphics[width=0.9\linewidth]{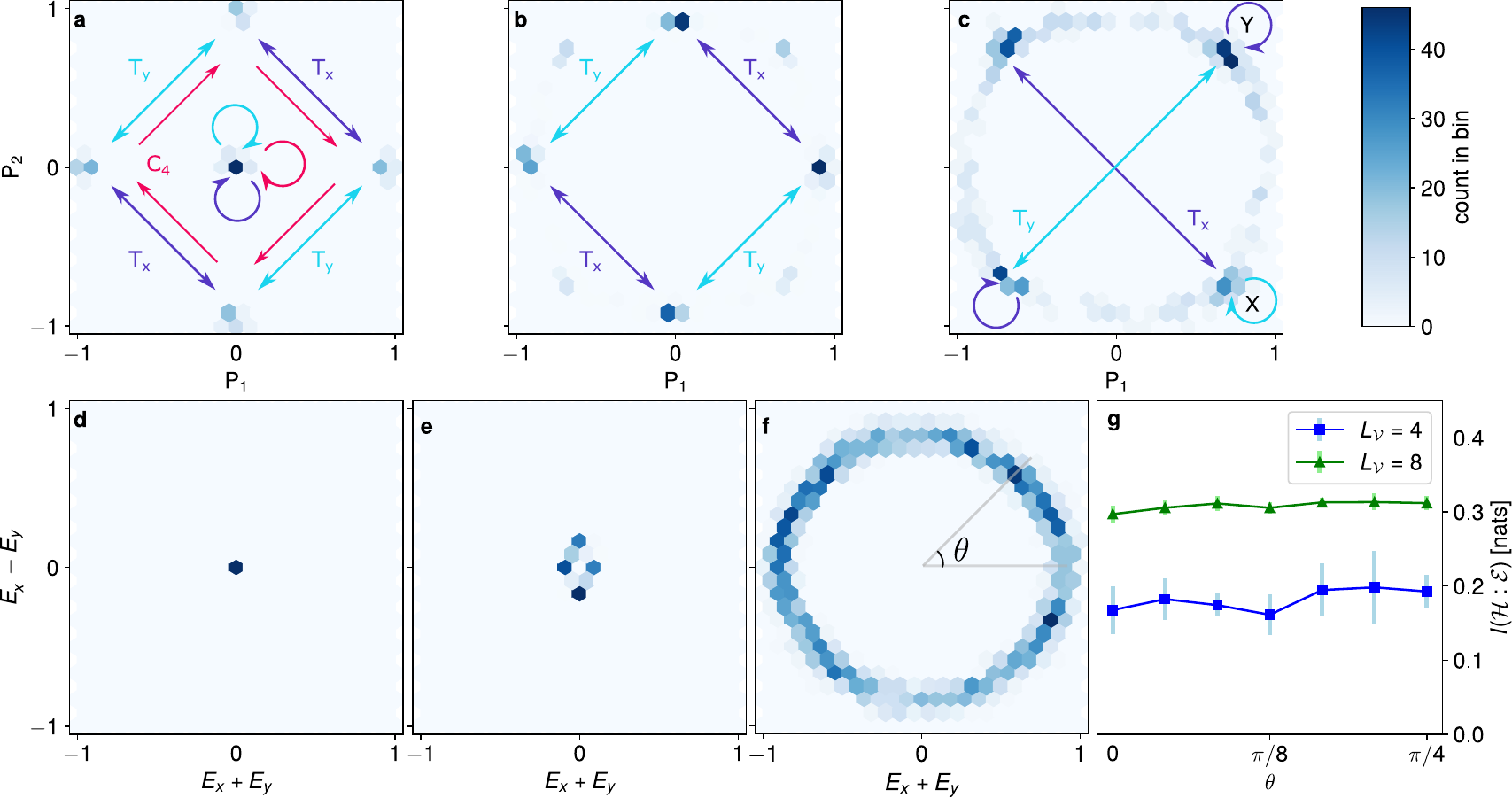}
		\caption{{\bf Identifying symmetries in the ensemble of optimal coarse-grainings.} {\bf a}-{\bf c} The projection of the ensemble of optimal filters onto the orthogonal basis $({\textsf P}_1, {\textsf P}_2)$ of \emph{pristine} plaquette filters at temperatures $\{0.2<T_{\ssm BKT}, 0.7\gtrsim T_{\ssm BKT}, 1.5\}$, respectively. The induced action of the symmetries on the filter components are marked with red arrows for ${\textsf C}_4$ rotation, and with purple and light blue for horizontal and vertical translations (${\textsf T}_x,{\textsf T}_y$), respectively. The central peak in {\bf a} corresponds to the columnar subspace $\pm {\textsf C}$, closed under  the action of both translation and ${\textsf C}_4$.  We see in {\bf b} that there appear two independent representations of the translation symmetry, namely one with the basis $({\textsf P}_1, {\textsf P}_2)$ and also another with $(\textsf{X},\textsf{Y}):=({\textsf P}_1 - {\textsf P}_2, {\textsf P}_1+{\textsf P}_2)$, see discussion in the text. For clarity, we show the ${\textsf T}_{x,y}$ action in these two representations separately in {\bf b} and {\bf c}. {\bf d}-{\bf f} The projection onto the orthogonal basis $(E_x+E_y, E_x-E_y)$ at temperatures $\{0.2<T_{\ssm BKT}, 0.7\gtrsim T_{\ssm BKT}, 15.0\}$, respectively. {\bf f} The ensemble for high-$T$ is approximately invariant under \emph{continuous} rotations in the space of filters. {\bf g} The RSMI is constant as a function of the rotation angle of the electrical field. This degeneracy in RSMI value is the origin of the \emph{emergent} $U(1)$ symmetric ensemble.}
		\label{fig:dimerU1}
	\end{figure*}

	Here we show how the different symmetries of the model, including emergent ones, can be observed in the statistical properties of the filter ensemble. To this end recall first (Sec.\ref{equilibrium}B.2) that a pair of plaquette filters, or a plaquette and a columnar one are degenerate in RSMI for $T < T_{\ssm BKT}$, and label uniquely the symmetry-broken states. This degeneracy is reflected in the equal frequency with which they appear as the optimal solutions in individual RSMI-NE runs. Likewise, the disappearance of the rotation-symmetry-breaking columnar filter from the ensemble above $T_{\ssm BKT}$ signals the lifting of the columnar/plaquette degeneracy and restoration of the rotation symmetry. These observations can be pushed further.
	
	To wit, though each individual instance of RSMI optimization returns an object which we can interpret as an operator, more generally there often exists an RSMI-degenerate subspace for the kernels of these operators, \emph{i.e.}~the filters $\Lambda$. In fact, the ensemble of many optimization instances (at the same physical parameters) allows to map out this subspace, and the resulting distribution contains systematic information about symmetries. To show this, we project the ensemble of filters generated in the dimer model from 500 optimization runs for each temperature onto the orthogonal bases formed by the pristine staggered and plaquette filters introduced above. 
	
	We begin in the ordered phase. In Fig.~\ref{fig:dimerU1}.a, the distribution of the optimal filters [pairs of $\pm{\textsf P}_{1,2}$, or $(\pm{\textsf C},\pm{\textsf P}_{1,2})$] is shown projected onto the plaquette filters, exhibiting four distinct peaks corresponding to $\pm{\textsf P}_{1,2}$, and a central peak corresponding to $\pm{\textsf C}$. It can easily be checked that these form closed subspaces under the action of discrete lattice translations ${\textsf T}_{x,y}$ in $x$ and $y$ directions, and ${\textsf C}_4$ rotations.
	Indeed, since both families of filter pairs are bijections of the four broken-symmetry states, they give simultaneous representations of the two broken symmetries. 
	
	More concretely, the action of ${\textsf C}_4$ around a lattice vertex leads to a $\mathbb{Z}_4$-cycle between the peaks, as shown by the red arrows in Fig.~\ref{fig:dimerU1}.a. Under ${\textsf T}_{x,y}$, on the other hand, $\pm{\textsf P}_{1,2}$ gives a representation of $\mathbb{Z}_2\times \mathbb{Z}_2$, as shown by the light-blue and purple arrows in Fig.~\ref{fig:dimerU1}.a. The columnar subspace $\pm{\textsf C}$ leads to a trivial representation as it is left invariant under both symmetries.
	
	Above the BKT transition, a more complex picture arises.\cite{PhysRevLett.94.235702,PhysRevE.74.041124} Especially in finite systems, the plaquette correlations remain strong and though the ${\textsf C}_4$ symmetry is restored, the low-lying excitations consisting of exchanging horizontally/vertically aligned dimer pairs around plaquettes do not immediately restore the translation symmetry.\cite{PhysRevE.74.041124}
	Since the two broken translation symmetry states differ by the ``site-parity" (\emph{e.g.}~the second and third configurations in Fig.~\ref{fig:dimergs}.b), using one of the coarse-graining components to label the site-parity 
	the filters can still recover at least one bit of long-range information (\emph{cf.}~Fig.~\ref{fig:dimerfig}) in a finite-sized system close to the transition point. 
	
	In particular, the pairs $({\textsf C},{\textsf P}_{1,2})$ become immediately sub-optimal above BKT (\emph{cf.} the disappeared central peak in Fig.~\ref{fig:dimerU1}.b), as they cannot produce a bijective labelling due to the restored ${\textsf C}_4$ symmetry.
	In contrast, the pairs consisting of $\pm{\textsf P}_{1,2}$ can still do the identification faithfully if $T$ sufficiently close to $T_{\ssm BKT}$, so that $\mathcal{V}$ contains at most one flipped plaquette. This is reflected in the persistence of the four $\pm{\textsf P}_{1,2}$ peaks in Fig.~\ref{fig:dimerU1}.b.
	
	Note, however, the development of four additional peaks on the diagonals in Fig.~\ref{fig:dimerU1}.b. They are readily understood: the information-rich ``site-parity" identification is insensitive to multiple plaquette flips if the filters $({\textsf X},{\textsf Y}):=\left(({\textsf P}_1-{\textsf P}_2),({\textsf P}_1+{\textsf P}_2)\right)$\footnote{This corresponds to an alternative (but equivalent) form $(\cos \varphi, \sin \varphi)$ of the $\mathcal{O}_1$ electric charge operator in the continuum sine-Gordon theory with the height field shifted by $\pi/4$ with respect to the expression in terms of ${\textsf P}_{1,2}$.} are used instead. This is because they exclusively couple to horizontal or vertical dimers in each component, hence the notation.
	
	As shown explicitly in Fig.~\ref{fig:dimerU1}.c, ${\textsf T}_x$ (${\textsf T}_y$) inverts the sign of ${\textsf X}$ (${\textsf Y}$), whereas ${\textsf T}_y$ (${\textsf T}_x$) stabilises it, resulting in a 1D representation of $\mathbb{Z}_2$. 
	Thus, slightly above the BKT transition there are two sets of peaks giving two independent representations of the broken translation symmetry, corresponding to the two labellings above. As $T$ is increased, the $\pm{\textsf P}_{1,2}$ peaks broaden and disappear, see Fig.~\ref{fig:dimerU1}.c.
	
	For completeness, we note that  $({\textsf X},{\textsf Y})$ is sub-optimal in the ordered phase as it cannot deterministically map \emph{both} the site-parity \emph{and} the orientation into binary variables. 
	\footnote{Explicitly, it maps the two horizontal (vertical) columnar configurations to tuples of probabilities (for getting $+1$ for the given component) $\Lambda\cdot\mathcal{V}=(1\pm1,2)/2$ ($(2,1\pm1)/2$).} Thus these diagonal peaks are absent in Fig.~\ref{fig:dimerU1}.a.
	
	Yet, still a further important question can be asked. The above discussion was about lattice symmetries, we know though that in the effective field theory of the dimer model this discrete ${\textsf C}_4$ symmetry is in fact enlarged to a full $U(1)$. Can the ensemble of the RSMI-optimal filters provide a hint of this \emph{emergent symmetry}?
	
	Surprisingly, the answer is affirmative. This is despite the fact that the filters are trained on the lattice, and even despite the imposed discrete-valuedness of the coarse-grainings. The emergent continuous symmetry is represented on the space of kernels $\Lambda$ of the coarse-graining maps, and reflected in the ensemble.
	
	To show this we project the filters in the ensemble onto the orthogonal basis of the pristine filters $(E_x+E_y, E_x-E_y)$. As shown in Fig.~\ref{fig:dimerU1}.f, at high T the kernel distribution of the electrical fields returned in individual RSMI runs (all of which exhibit a, at first glance ``similar", diagonal pattern) is indeed invariant under continuous rotations!
	In fact, as plotted in Fig.~\ref{fig:dimerU1}.g, the RSMI is constant as a function of the angle $\theta$ of the \textit{rotated} electrical fields $R(\theta)(\Lambda_{\textsf S1}, \Lambda_{\textsf S2})\sim \left(\cos(\theta) E_x - \sin(\theta )E_y, \sin(\theta) E_x + \cos(\theta) E_y\right)$, thus demonstrating the connection between the (emergent) operator symmetry and the RSMI-degeneracy. For temperatures below the BKT transition, the filters do not overlap with the electrical fields (Fig.~\ref{fig:dimerU1}.d), while for those closely above (Fig.~\ref{fig:dimerU1}.e) the continuous $U(1)$ symmetry is not yet emerged, due to the effect of the plaquette operators in this finite system [contrast with Figs.~\ref{fig:dimerU1}.(b,c)].
	
	The above analysis of the ensemble can be improved by \emph{e.g.}~using more sophisticated methods to disentangle the mixtures more data-efficiently. Nevertheless, we conclude that even simple data analysis allows to extract the most important operators, and the symmetries (see also Ref.\onlinecite{bondesan2019learning}). 
	
	Finally, we remark that the filter ensemble may be a very useful and natural notion in disordered systems, where the RSMI approach also applies,\cite{optimalRSMI} and the filters may depend on the quenched disorder realization. We leave this intriguing possibility to future work.
	
	\subsection{Coarse-graining filters: type and number of components}\label{subsection:num_components}
	In the previous sections we demonstrated among others that the RSMI-optimal filters label ordered states, and how, more generally, the symmetries of the system, broken or emergent, even continuous, are manifested on the space of filters $\Lambda$. An important question is whether and how these results depend on the constraints imposed on the image of $\Lambda$, \emph{i.e.} the type of the coarse-grained variable $\mathcal{H}$. 

	There are, in fact, two questions: that of continuous- \emph{vs.}~discrete-valued $\mathcal{H}$, and that of the number of $\mathcal{H}$ components, which we address in this  section. We begin with the former, in particular explaining the surprising emergence of a continuous symmetry in ensemble of filters (as in the high $T$ dimers) even as $\mathcal{H}$ is discrete. 
	
		\begin{figure}[t]
		\centering
		\includegraphics[width=1\linewidth]{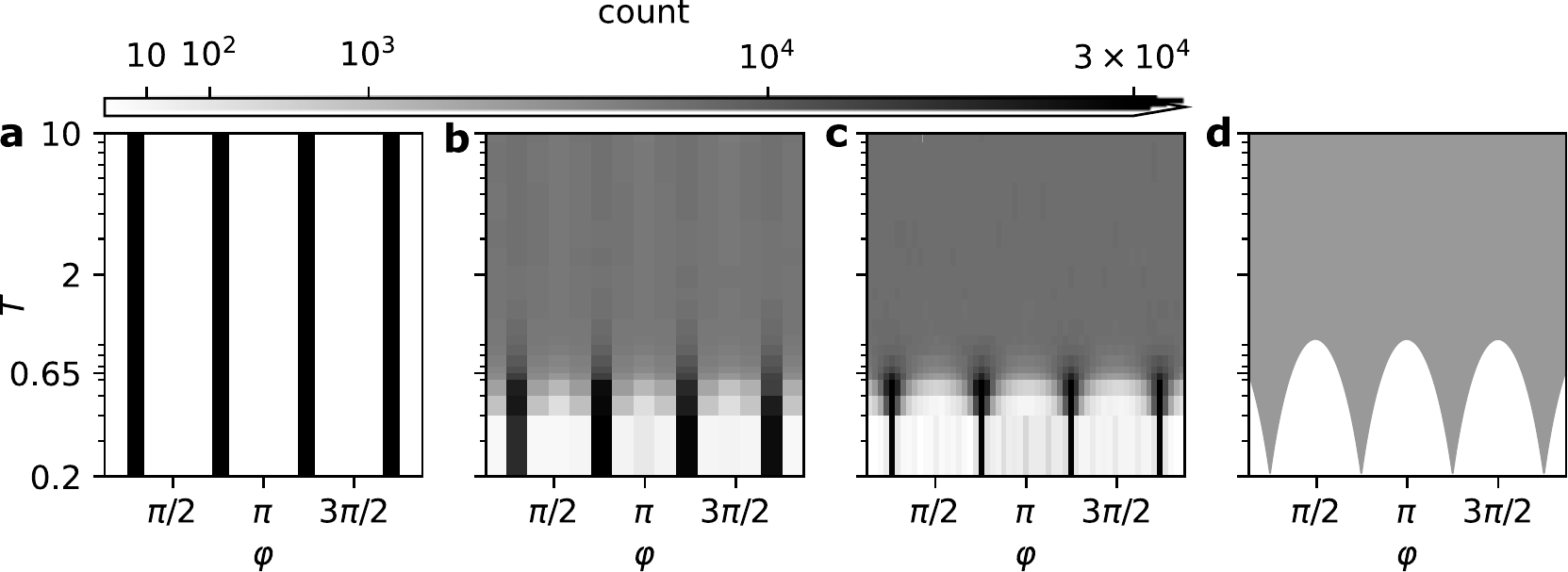}
		\caption{{\bf Coarse-graining smears the discrete lattice height field into a sine-Gordon field.} {\bf a} The histogram of values taken by the height field $\varphi(\mathbf{r}_i) \mod 2 \pi$ (defined on vertices $i$) for $T$ across the BKT transition; it is independent of temperature. {\bf b}, {\bf c} The histogram of values of the block-averaged field $\langle  \varphi \rangle_{\mathbf{r}_i \in \mathcal V}$ respectively for $4\times 4$ and $8\times 8$ blocks (corresponding to $L_\mathcal{V}=8, 16$ in dimer configurations) 
			{\bf d} The range of the continuous sine-Gordon field $\varphi(\mathbf{r})$.
		}
		\label{fig:dimerheightsupport}
	\end{figure}
		
	\subsubsection{Discrete and continuous variables}
	To this end, let us examine the role of coarse-graining (in a broad sense) in constructing a continuum low-energy effective field theory starting from a discrete lattice model, for which the dimer model is a good example.
	
	On any bipartite planar lattice, a dimer configuration $C(\mathbf{r}_i)$ can be mapped onto a unique height profile $\phi(\mathbf{r}_i)$.\cite{fradkin_2013}
	In 2D square lattice, the height at a given site is a four-state variable, invariant under a shift by $2\pi$, giving it a natural interpretation as the orientation of the dimer connected to the given site. Equivalently, the distribution of $\phi(\mathbf{r}_i)$ is uniform over four \emph{discrete} values (see Fig.~\ref{fig:dimerheightsupport}.a).

	The \emph{continuum} effective theory of the interacting dimer model is given by a sine-Gordon (SG) action:
	\begin{equation}\label{eq:sine-Gordon}
		S[\phi]=\int \mathrm{d}^2 \mathbf{r}\left[\frac{g(T)}{2}|\nabla \varphi(\mathbf{r})|^2 + \cos\left( 4 \varphi(\mathbf{r})\right) \right],
	\end{equation}
	 where the lattice height $\phi(\mathbf{r}_i)\in\mathbb{Z}_4$ is continued into a \emph{real-valued} $\phi(\mathbf{r})\in [0,2\pi)$. The first term in Eq.~\ref{eq:sine-Gordon} is the energy density associated to the electrical field $\mathbf{E}:=\nabla \varphi$, which accounts for the entropy of the dimers and dominates at high $T$. The cosine \emph{locking} potential orders $\varphi(\mathbf{r})$ into a flat profile with one of the four values $\{\pi/4, 3\pi/4, 5\pi/4, 7\pi/4\}$, corresponding to the four ground states of the lattice model. In other words, the BKT transition happens precisely at the temperature at which the range of $\varphi(\mathbf{r})$ fragments into four sectors, which eventually collapse into discrete delta-peaks at $T=0$, as shown in Fig.~\ref{fig:dimerheightsupport}.d.
	
	Numerically, the lattice and the SG pictures are bridged by mapping the discrete lattice heights onto \emph{block-averaged} values $\langle  \varphi \rangle_{\mathbf{r}_i \in \mathcal V}$ on the coarse grid. As seen in Figs.~\ref{fig:dimerheightsupport}.b and c, this simple procedure indeed smears out the discrete heights into a quasi-continuous variable of the same character as the SG field.
	The larger block size $L_{\mathcal V}$, the closer the coarse-grained variable to a continuous SG field.
	
	This observation is crucial in clarifying the role of discretisation in RSMI-NE, and sheds light on the scale $L_{\mathcal V}$.  Particularly, for the case of free dimers the RSMI filter computes the block average of the electric field:
	\begin{equation}
		\Lambda_\alpha \cdot \mathcal{V} = \sum_i \Lambda^i_\alpha \mathcal{V}_i = \sum_I \underbrace{\sum_{i\in I} \Lambda^i_\alpha \mathcal{V}_i}_{=:E^I_\alpha} =\langle E_\alpha \rangle_{\mathcal V},
	\end{equation}
	where $i$ iterates over bonds, and $I$ over disjoint smallest unit blocks (here this is $L_{\mathcal V}= 4$) on which the electric field can be defined; $\alpha$ denotes the electric field (or filter) component. On the other hand, in terms of the lattice height field, this means that (see Supplemental Material of Ref.\onlinecite{rsmine_letter}):
	\begin{equation}
		\Lambda_\alpha \cdot \mathcal{V} = \langle \partial_\alpha \varphi \rangle_{\mathcal V}=\partial_\alpha\langle \varphi\rangle_{\mathcal V},
	\end{equation}
	where $\partial_\alpha$ is the lattice gradient. In other words, the inner product computes the electrical field averages using the block-averaged heights (\emph{e.g.}~using $L_{\mathcal V}=8$, as we had in the discussion of the dimer model), which approximate the continuous ones of the SG theory. This is the reason why the emergent continuous $U(1)$ symmetry can manifest itself in the filter ensemble when the gradient part of the SG action dominates (in the finite system as $T\to \infty$), despite the discrete input data.

	Note that composing $\Lambda_\alpha \cdot \mathcal{V}$ with a binary discretisation map $\tau$ does not interfere with the $U(1)$ symmetry in the filter ensemble. We emphasize the same filters are obtained if no discretization is assumed. Discretisation becomes important, though, when the support of $\varphi$ starts to get locked into four values at low $T$. The information to be retained precisely comes from identifying these four peaks, and $\tau$ increases the efficiency of finding the filters bijectively mapping the symmetry-broken states by dramatically restricting the search space. In this sense it acts as an \emph{a posteriori} entropy cut-off/regularisation.
	
	We thus conclude that discrete-valuedness of coarse-graining maps does not interfere with or preclude continuous symmetries manifesting, and further serves as a regularisation, which is very useful when the number of configuration samples is limited.

	\subsubsection{Number of components}
	Next, we explain how the necessary number of hidden variable components can be \emph{discovered} systematically, providing information about the system. This particularly will justify using a single component variable in the Ising example, and two for the dimer model.
	
	The essence of the RSMI-NE approach is the efficient compression of the long-range information. Any compression method contains a trade-off (explicit or implicit) between the compression rate \emph{e.g.}~given by the total number of bits retained, and the preservation of relevant information. Ideally one should compress to preserve just sufficiently enough information relevant for the downstream task the compressed representation is used in -- but not more than that. In RSMI-NE the compression rate is effectively controlled by the number and the alphabet of allowed values of the coarse degrees of freedom $\mathcal{H}$ (for example $\{\pm1\}$, $\{\uparrow,\downarrow$\}, $\{\ket{0},\ket{1},\ket{2}\}$, etc.). The relevant long-range information, on the other hand, is a property of the physical system, which we do not have control over.
	
	\begin{figure}[]
		\centering
		\includegraphics[width=0.9\linewidth]{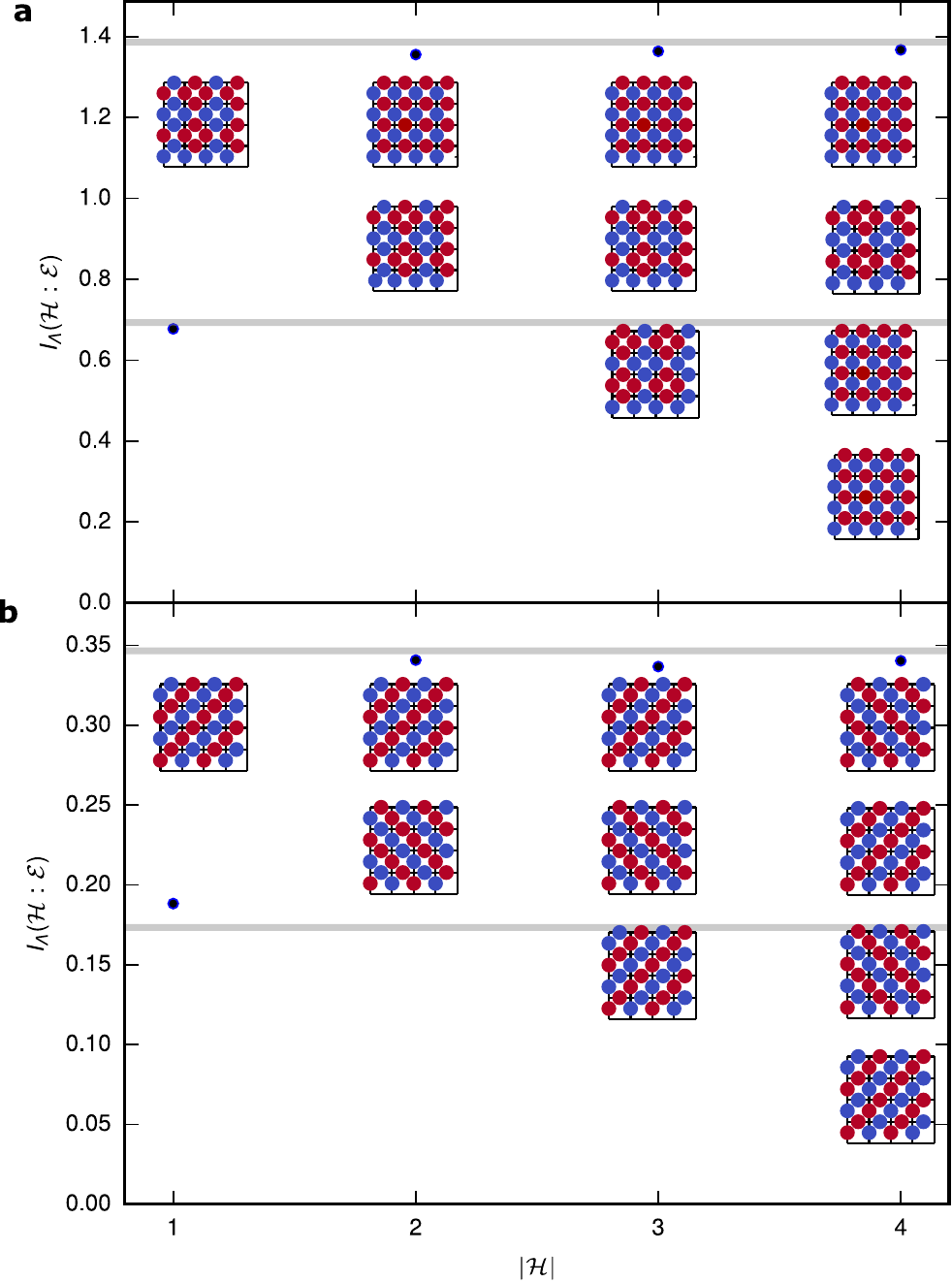}
		\caption{{\bf The maximal RSMI versus number of components of $\mathcal H$ in the dimer model.} {\bf a} For $T<T_{\ssm BKT}$, the only long-range information is about the type of ground-state the system crystallises into. Using two binary components for $\mathcal{H}$ suffices to encode this information and adding further components does not improve this result, as reflected by the value of $I_\Lambda(\mathcal{H}:\mathcal{E})$ attained. {\bf b} At $T\to \infty$, two electric field components recover the maximal long-range information $\approx \frac{1}{2}\log 2$ (for $L_{\mathcal{B}}=4$). This is not improved by additional components, which converge to filters linearly dependent on the first two. }
		\label{fig:dimercompscan}
	\end{figure}
	
	Since \emph{a priori} we do not know (though in practice we may often anticipate) what the amount of long-range information in the system is, a simple practical procedure to determine the optimal compression rate is to find the maximal number of compressed variables $|\mathcal{H}|$ above which the retained RSMI does not further increase significantly. Fig.~\ref{fig:dimercompscan} shows that this happens at $|\mathcal{H}|=2$ for the dimer model both in the high-$T$ limit, and the columnar ordered phase at low $T$. In both cases we use a fixed buffer width of $L_\mathcal{B}=4$. We find that using a single binary component leads to only half the RSMI value attained with $|\mathcal{H}|=2$, and is obtained by an optimal filter equal to one of the components of the two-component rule. For $|\mathcal{H}|>2$, the RSMI saturates into either $\frac{1}{2}\log2$ at $T\to \infty$ or $\log 4$ at $T<T_{\ssm BKT}$. Thus we verify that at most two of the components are linearly independent, and additional filters do not extract distinct information from $\mathcal{V}$. In the 2D Ising model example, in contrast, only a single filter suffices (and it corresponds to magnetization).
 	The physical intuition behind the above procedure is clear: it finds the number of relevant operators whose correlations explain the total information shared between distant parts of the system.
	
	Finally, we note that while the variational \emph{ansatz} used for the coarse-graining was a shallow network dotted into the configurations (before the non-linear Gumbel-softmax step), more general or deeper network architectures can be considered. RSMI maximization can be performed over any class of variational functions. For specific systems/inputs a more expressive \emph{ansatz} could possibly recover larger RSMI. In more abstract terms, since the coarse-graining rules are related to the RG-relevant operators,\cite{rsmine_letter} such choices would be able to extract operators which cannot be written as linear functions of the local degrees of freedom, should these be important. Though for such complex multi-layered architectures the patterns of the weights themselves may not be directly interpretable, the extracted $\Lambda$ filters can still be used as operators, as we have done in computing the correlation functions in Ref.~\onlinecite{rsmine_letter}. We again emphasize that the physical interpretability is not a consequence of the architectural choices, but of the physical nature of the RSMI quantity.\cite{Gordon2020}

	\section{Possible extension to non-equilibrium: a model with chipping and aggregation}\label{section:chipping}
	
	The RSMI-NE algorithm we described does not in any way use or rely on the existence of a Hamiltonian generating the probability distribution. It can thus be directly applied to general  non-equilibrium distributions, though the formal understanding of the optimal filters in this situation, analogous to results of Refs.~\onlinecite{Gordon2020,rsmine_letter}, is currently missing.
	This is an exciting research direction, whose development we leave to future work. Here, however, we provide a short validation of the idea. 
	
	To this end we consider the non-equilibrium example of the 1D chipping and aggregation model of Ref.~\onlinecite{PhysRevE.63.036114}. Its stochastic dynamics is defined by the update rules given below. At any time increment $\Delta t$, masses $m_i$ on site $i$ in the chain are modified according to the following moves:
	\begin{widetext}
	\begin{align}
		\text{chipping, at rate } p=\Delta t:\hspace{0.5 cm} &
		\begin{cases}
			\text{if }m_i>0&: \, m_i\mapsto m_i-1, \hspace{0.5 cm} m_{i\pm 1}\mapsto m_{i\pm1}+1,\\
			\text{if }m_i=0&:\, \text{do nothing},
		\end{cases}\\
		\text{aggregation, at rate }p=w\Delta t:\hspace{0.5 cm}  &m_i\mapsto 0, \hspace{0.5 cm} m_{i\pm1}\mapsto m_{i\pm1}+m_i.
	\end{align}

	\begin{figure*}[]
		\centering
		\includegraphics[width=\linewidth]{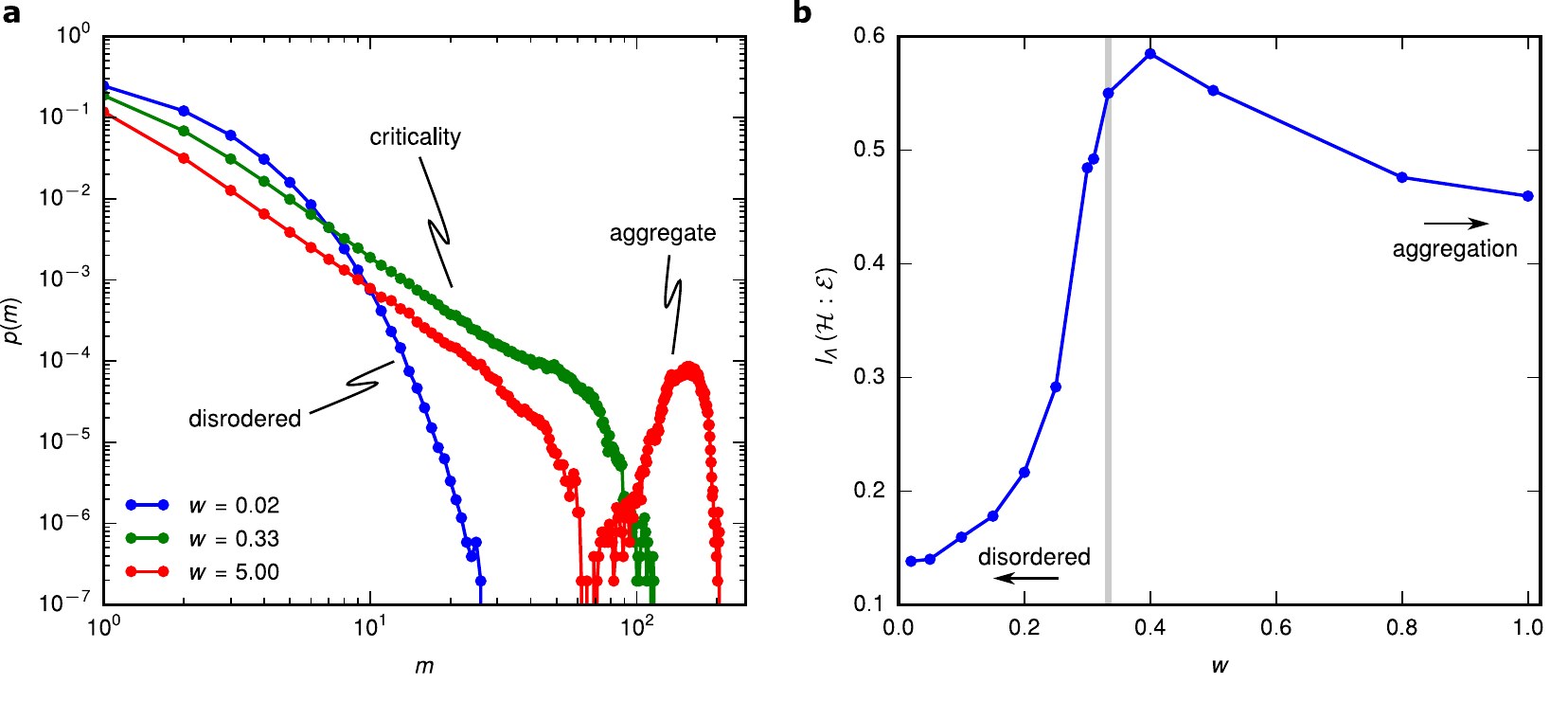}
		\caption{{\bf Phase transition in the chipping and aggregation model.} {\bf a} Observing the decay profile of the marginal mass distribution for a given site at different values of $w$ is one way of qualitatively assessing the different phases of the chipping model. {\bf b} By providing merely the real-space sample configurations, the peaking maximal value of RSMI signals the non-equilibrium critical point. Furthermore, the aggregated and low-$w$ phases can be distinguished by the distinct saturated values.}
		\label{fig:rsmichipping}
	\end{figure*}
	\end{widetext}
	We consider the case with mass density $\rho=1$, so that $\sum_i m_i=L$, where $L$ is the length of the chain for which we impose periodic boundary conditions.
	
	This system undergoes a non-equilibrium transition between phases, whose qualitative difference is reflected in the marginal probability distribution $p(m)$ of masses at each site, as shown in Fig.~\ref{fig:rsmichipping}.a. For low aggregation $w$, the distribution $p(m)$ is an exponentially decaying function with increasing mass per site. At high $w$, a macroscopic proportion of the masses aggregate (or condense) to a single site, corresponding to a delta-function peak around $m\approx L$, with an algebraically decaying part for the remainder of the masses. Criticality of the intermediate $w$ region is reflected by a purely algebraically decaying distribution $p(m)$.
	
	The shape of the marginal distribution $p(m)$ clearly demonstrates the different phases and the transition between them. Instead of, however, investigating such specific quantities, which requires at least an intuitive understanding of the physics of the system, one can generically identify the non-equilibrium critical point using RSMI-NE directly applied to the full real-space configurations. As shown in Fig.~\ref{fig:rsmichipping}.b the dependence of the optimal-RSMI at $w$ readily points to a transition between the phases with different spatial correlation characteristics. We performed RSMI optimisation on the model with $L=256$, for a range of values of $w$, and used a buffer of width $L_\mathcal{B}=8$. While at all values of $w$ we found that the optimal coarse-graining rule averages the mass for the region $\mathcal{V}$, the value of the maximal information saturates into different values in the low-$w$ and the aggregated phases. Moreover, the transition point is marked by a slight peak in the RSMI.
	
	These results demonstrate the in principle applicability of the RSMI maximisation to non-equilibrium problems. While this example reaches the steady-state distribution very quickly, in a more general scenario involving far-from-equilibrium systems the formalism can be extended to screen out short-time correlations by introducing a buffer in the temporal direction. This line of research merits a full development, which we leave for future study.

	\section{Conclusions and outlook}\label{conclusion}
	
	In this work, accompanying and extending Ref.~\onlinecite{rsmine_letter}, we demonstrate how recent rigorous results in ML-based estimation of information theoretic quantities\cite{belghazi2018mine,poole2019variational} can be combined with other algorithmic ingredients\cite{gumbel1954,NIPS2014_5449} to yield a formally interpretable and numerically efficient algorithm extracting information about long-range properties of statistical systems from its raw configurational samples. We dub this unsupervised algorithm RSMI-NE, or the Real-space Mutual Information Neural Estimator, from the key physical quantity of interest,\cite{Koch-Janusz2018, optimalRSMI, Gordon2020} and provide a detailed introduction to the method and background concepts, as well as an examination of its properties.
	
	The optimal real-space coarse-graining filters extracted depend on the parameters of the physical system, and in fact are lattice representations of its most relevant operators in the sense of renormalization group (RG).\cite{rsmine_letter} The filters, along with the RSMI value, and its dependence of systems' parameters and length-scales, provide a comprehensive physical picture. Particularly, the position and nature of the critical points, decay and structure of the spatial correlations, together with the type of order is revealed by these quantities, which we demonstrate explicitly in equilibrium examples.
	Though we focused on statistical models on regular lattices, the algorithm can be applied to continuum or graph models as well.
	
	We further introduce the notion and examine the properties of the \emph{ensemble of the optimal coarse-graining filters}. We show that it contains important physical information, most notably about the symmetries, including \emph{emergent} ones. The ensemble can be the object of statistical analysis itself, allowing to extract the relevant operators from partially complete data, or from restricted parameter regimes.
	
	We also examined and validated the possibility of extending the applicability of the algorithm to non-equilibrium systems on the example the chipping model with aggregation,\cite{PhysRevE.63.036114} for which the presence and position of a non-equilibrium phase transition was detected. 
	
	Motivated by the above example, the full extension of the framework to the case of non-equilibrium systems is among the most promising and exciting future research directions. This requires investigating the RSMI approach with spatio-temporal coarse-graining rules, and the information shared between temporally ordered states\cite{PhysRevE.79.041925} and extending the theoretical results of Ref.~\onlinecite{Gordon2020}. The existing algorithm does, however, already allow the investigation of spatial correlations in complex real-world systems, such as \emph{e.g.}~meteorological precipitation data exhibiting critical points, possibly related to self-organized-criticality.\cite{Peters2006}
	
	Previous formal results\cite{optimalRSMI} and the numerical possibility of individually optimising the coarse-graining rules for each block invite the application of RSMI-NE to disordered systems. In particular, the ensemble of coarse-graining rules would be inherited naturally from the disorder distribution and the further statistical analysis may identify certain equivalence classes within this ensemble. This would be especially helpful in identifying the relevant DOFs in these challenging systems.
	
	Finally, we emphasize that the RSMI-NE provides an important step towards the goal of automating certain aspects of theory building. The constructed outputs (the coarse-graining transformations) are effectively black-box algorithmic objects, which, however, can be assigned the formal identity of order parameters or scaling operators of the physical theory. They can explicitly be used as such to compute correlations functions or scaling exponents, as shown in the companion work Ref.~\onlinecite{rsmine_letter}. The above results clearly invite further work in this direction.

	
	\bigskip
	\noindent
	{\bf Code availability} Source code and an example Python notebook for the RSMI-NE package are available online at \url{https://github.com/RSMI-NE/RSMI-NE}.
	
	\smallskip
	\noindent
	{\bf Acknowledgements} M.K.-J. is grateful to F. Alet for his comments on the physics of the interacting dimer model. We acknowledge insights into the physics of the chipping and aggregation model coming from an earlier study of this system together with R. Thomale. D.E.G., S.D.H., and M.K.-J. gratefully acknowledge financial support from the Swiss National Science Foundation and the NCCR QSIT, and the European Research Council under the Grant Agreement No.~771503 (TopMechMat), as well as from European Union's Horizon 2020 programme under Marie Sklodowska-Curie Grant Agreement No.~896004 (COMPLEX ML). Z.R. acknowledges support from ISF grant 2250/19. Some of the computations were performed using the Leonhard cluster at ETH Zurich. This work was supported by a grant from the Swiss National Supercomputing Centre (CSCS) under project ID eth5b.

	\ifpreprint%
		\clearpage
	
		\appendix
		\begin{widetext}
			\begin{center}
				{\normalsize \bf Supplemental Material: Symmetries and phase diagrams with real-space mutual information neural estimation}
			\end{center}
			
	\section{Some properties of mutual information}
	The Shannon \textit{mutual information}\index{mutual information} (MI) of two random variables $\mathcal{X}$ and $\mathcal{Y}$ quantifies the decrease in entropy of one of the random variables when the other one is observed. Equivalently, it is the amount of knowledge we gain about one of them, when observing the other. Formally it is defined as a difference of entropies:
	\begin{equation}
		I(\mathcal{X}:\mathcal{Y}):=H(\mathcal{X})-H(\mathcal{X}|\mathcal{Y}) = H(\mathcal{X}) + H(\mathcal{Y}) - H(\mathcal{X},\mathcal{Y}).
	\end{equation}
	The above expression indicates that the real-space mutual information (RSMI) can take values at most on the order of a few units of information when coarse-graining small blocks $\mathcal{V}$ that contain $N_\mathcal{V}$ individual degrees of freedom with a discrete alphabet of $n$ symbols. Indeed:
	\begin{equation}\label{RSMI_is_small}
		I_\Lambda(\mathcal{H}:\mathcal{E})\leq I(\mathcal{V}:\mathcal{E})=H(\mathcal{V})-H(\mathcal{V}|\mathcal{E})\leq H(\mathcal{V})\leq N_\mathcal{V}\log n,
	\end{equation}
	where, since $\Lambda$ compresses $\mathcal{V}$ into $\mathcal{H}$, the first inequality follows from the data-processing inequality and the second inequality follows from the positive semi-definiteness of Shannon entropies. This property ensures the applicability of the MI estimation methods by maximising variational lower-bounds, which can suffer from a bias-variance trade-off in the opposite regime,~\emph{i.e.}~when the MI is large.
	
	By expanding the entropies, we recover an alternative expression for $I(\mathcal{X}:\mathcal{Y})$:
	\begin{equation}
		I(\mathcal{X}:\mathcal{Y})=:D_{\rm KL}\left[p(x,y)||p(x)p(y)\right],
	\end{equation}
	where $D_{\rm KL}$ is the Kullback-Leibler (KL) divergence\footnote{The Kullback-Leibler divergence is a measure of distance (but formally not a metric) between the two probability distributions in its argument.}. The Gibbs' inequality $D_{KL}(p||q)\geq 0$ predicates on a useful interpretation of MI: given a pair of random variables $(\mathcal{X},\mathcal{Y})$ jointly distributed according to $p(x,y)$, $I(\mathcal{X}:\mathcal{Y})$ measures the information lost when encoding $(\mathcal{X},\mathcal{Y})$ as a pair of independent random variables while they may not be so. This loss is $0$ if and only if $\mathcal{X}$ and $\mathcal{Y}$ are actually independent of each other, \emph{i.e}~when $p(x,y)=p(x)p(y)$.
	
	\subsection{Log concave bound and TUBA}
	The issue of estimating the log partition function appearing in the UBA bound is circumvented by taking advantage of the concavity of the logarithm. As discussed in the main text, this lead to the so-called tractable unnormalised BA (TUBA) lower-bound, first derived by Poole \textit{et al.~}in Ref.~\onlinecite{poole2019variational}. Here we briefly expand the further details of the derivation.
	
	Since $\log$ is a strictly concave function, by the mean value theorem we have:
	\begin{align*}
		\log x - \log 1 &\leq (x-1) \frac{{\rm d}}{{ \rm d}x}\log x\big{|}_{x=1}\\
		\log x&\leq x - 1, \hspace{0.5cm} x>0.
	\end{align*}
	This implies that:
	\begin{equation}\label{log_concave_bound}
		\log Z = \log \frac{Z}{a} + \log a \leq \log a + \frac{Z}{a} - 1, \hspace{0.5cm} Z,a>0.
	\end{equation}
	Substituting the RHS of this inequality in the UBA lower-bound in Eq.~(\ref{UBA}), we obtain the TUBA lower-bound:
	\begin{align}
		I_{\rm UBA}(\mathcal{X}:\mathcal{Y}) &\geq \mathbb{E}_{p(x,y)}[f(x,y)] -  \mathbb{E}_{p(y)}\left[\log a(y) + \frac{Z(y)}{a(y)} - 1\right]\nonumber\\
		&\geq \mathbb{E}_{p(x,y)}[f(x,y)] - \mathbb{E}_{p(x)p(y)}\left[\frac{e^{f(x,y)}}{a(y)}\right] - \mathbb{E}_{p(y)}[\log a(y)] -1 =: I_{\rm TUBA}(\mathcal{X}:\mathcal{Y}).
	\end{align}
	
	\subsection{Extremum and the tightness of the NWJ bound}\label{NWJ_extremum}
	The extremum $f^*(x,y)$ of $I_{\rm NWJ}(\mathcal{X}:\mathcal{Y})=I_{\rm NWJ}(\mathcal{X}:\mathcal{Y})[f(x,y)]$ is found by setting its functional derivative to $0$:
	\begin{align}
		&0\stackrel{!}{=}\frac{\delta}{\delta f(x,y)}\left(\sum_{x,y}p(x,y)f(x,y)-e^{-1}\sum_{x,y}p(x)p(y)e^{f(x,y)}\right)\Bigg{|}_{f(x,y)=f^*(x,y)}\\
		\implies\text{either}& \hspace{0.5cm} p(x,y) - e^{-1}e^{f^*(x,y)}p(x)p(y) = 0 \hspace{0.5cm} \forall x,y, \hspace{0.5cm} \text{(maximum),} \nonumber\\
		\text{or}& \hspace{0.5cm} f^*(x,y)=1\hspace{0.5cm} \forall x,y, \hspace{0.5cm} \text{(minimum),} 
	\end{align}
	which implies that the optimal \textit{ansatz} is:
	\begin{equation}\label{opt_NWJ}
		\hspace{0.5cm} f^*(x,y)=1+\log \frac{p(x,y)}{p(x)p(y) }.
	\end{equation}
	Substituting this in Eq.~(\ref{NWJ_bound}), we thus find that the NWJ lower-bound is tight when $f=f^*$, \emph{i.e.}, $I_{\rm NWJ}(\mathcal{X}:\mathcal{Y})[f=f^*]=I(\mathcal{X}:\mathcal{Y})$: in this case we have:
	\begin{align*}
		Z[f^*](y)=\sum_x p(x) e^{f^*}=e\sum_x p(x) \frac{p(x,y)}{p(x)p(y)}
		=e,
	\end{align*}
	and since $a(y)=e$, the log concave inequality~\ref{log_concave_bound} becomes tight.
	
	\subsection{Upper bound of InfoNCE}\label{NCE_upper}
	Even though it is manifest in the derivation that InfoNCE is a lower-bound to the mutual information, it need not be a tight bound. To see this we can express InfoNCE as:
	\begin{equation*}
		I_{\rm NCE}(\mathcal{X}:\mathcal{Y})[f]=\mathbb{E}_{\prod_{k=1}^K p(x_k, y_k)} \left[\frac{1}{K}\sum_{j=1}^K\left( f(x_j, y_j) -\log \sum_{i=1}^K e^{f(x_i, y_j)} \right) \right]+\log K.
	\end{equation*}
	Since $\sum_{j=1}^K e^{f(x_i,y_j)}>e^{f(x_j,y_j)}$, and since the logarithm is a monotonically increasing function for positive arguments, we have:
	\begin{align}
		I_{\rm NCE}(\mathcal{X}:\mathcal{Y})[f]<&\mathbb{E}_{\prod_{k=1}^K p(x_k, y_k)} \left[\frac{1}{K}\sum_{j=1}^K\left( f(x_j, y_j)  -\log  e^{f(x_j, y_j)} \right) \right]+ \log K\nonumber\\
		&=\log K, \hspace{0.5cm} \forall g.
	\end{align}
	
	Thus, InfoNCE is bounded from above by $\log K$. In other words, the InfoNCE bound is not tight if the number of replicas is not sufficiently large compared to the value of the mutual information or, more precisely, when:
	\begin{equation}\label{InfoNCE_biasvariance}
		e^{I(\mathcal{X}:\mathcal{Y})}>K.
	\end{equation}
	Note that, as we mentioned above, in the regime the RSMI-NE algorithm is working this is not a concern, since the real-space mutual information is at most a few bits. 
	
	\subsubsection{Maximal value of InfoNCE (further properties)}\label{NCE_practical}
	The expectation value in InfoNCE is taken over multiple samples of the $K$-replica random variable $(x_{1:K}, y_{1:K})$. A single $2K$-dimensional replica sample can be considered as a minibatch of $K$ samples, each drawn from $p(x,y)$. Therefore, for a total of $nK$ samples drawn from $p(x,y)$, we compute the InfoNCE bound by practically forming an $n$-sample MC estimate for the expectation value of the expression:
	\begin{equation}
		\sum_{j=1}^K \log \frac{e^{f(x_j,y_j)}}{\sum_{i=1}^{K}e^{f(x_i,y_j)}}.
	\end{equation}
	Here, the argument of the logarithm is known as the softmax function\index{softmax function}:
	\begin{equation}\label{eq:si_softmax}
		{\rm softmax}_j (\mathbf{v}):=\frac{e^{v_j}}{\sum_i e^{v_i}},
	\end{equation}
	and defining the \textit{prediction} $Q:=\prod_{j=1}^K{\rm softmax}_j\left(f(x_{1:K},y_j)\right)^{1/K}$, we arrive at:
	\begin{equation}\label{NCE_expectant}
		\sum_{j=1}^K \log \frac{e^{f(x_j,y_j)}}{\sum_{i=1}^{K}e^{f(x_i,y_j)}}=\log \prod_{j=1}^K{\rm softmax}_j\left(f(x_{1:K},y_j)\right)=K\log Q(x_{1:K},y_{1:K}).
	\end{equation}
	Note that the arguments $(x_i,y_j)$ of $g$ with $i\neq j$ correspond to samples that belong to separate minibatches, whereas $(x_j,y_j)$ denote joint samples. Consequently, by inspecting Eq.~(\ref{NCE_expectant}), we see that maximising the InfoNCE bound requires a $f(x_i,y_j)$ that can discriminate jointly and independently drawn samples (\textit{cf.} TUBA and NWJ bounds). 
	
	More precisely, if $p(x,y)\neq p(x)p(y)$, the product of softmax functions selects $f(x_j,y_j)$ which takes the largest relative value compared to other $f(x_{k\neq j},y_j)$ for all $j$, in which case the logarithm goes to $0$ and $I_{\rm NCE}(\mathcal{X}:\mathcal{Y})$ gets closer to its maximal value (either $I(\mathcal{X}:\mathcal{Y})$ or $\log K$). On the contrary, if $p(x,y)= p(x)p(y)$, then it is impossible to distinguish independent and joint samples, and $g$ takes similar values for all arguments. In this case $Q(x_{1:K},y_{1:K})$ becomes roughly uniform, i.e., $=1/K$ and $I_{\rm NCE}(\mathcal{X}:\mathcal{Y})$ vanishes, as it should. In other words, InfoNCE is an effective binary classifier of bivariate probability distributions, distinguishing whether they are a product of marginals or not.
	
	In fact, up to an additive constant, the InfoNCE bound is simply equal to the categorical cross-entropy\footnote{The cross-entropy function is commonly used in ML as an objective function.} $H[P,Q]$, for correctly distinguishing a joint sample from all $K-1$ independent samples. That is:
	
	\begin{equation}\label{CCE}
		H[P, Q]:=-\mathbb{E}_{P(x_{1:K},y_{1:K})}\left[\log Q(x_{1:K},y_{1:K})\right] = -I_{\rm NCE}(\mathcal{X}:\mathcal{Y})+\log K,
	\end{equation}
	with $Q(x_{1:K},y_{1:K})$ being the \textit{prediction} of InfoNCE and $P(x_{1:K},y_{1:K})=P(x_{1:K},y_{1:K}):=\prod_{i=1}^K p(x_i, y_i)$ is the product distribution of the $2K$-dimensional replica sample $(x_1,\cdots,x_K,y_1,\cdots y_K)$.
	

\section{Further details of the RSMI-NE algorithm}
\subsection{Gumbel-softmax reparametrisation of discrete random variables}

\subsubsection{Gumbel-max reparametrisation}
Let $\{\pi_i\}_{i=1}^N$ be an $N$-state categorical distribution, where $\pi_i$ denotes the probability of drawing a sample in $i$'th state. Furthermore let us define the Gumbel (or the double-exponential) distribution centred at $\mu$:
\begin{equation}
	p_\mu(z):=\exp(-z+\mu)\exp\left(-\exp(-z+\mu)\right).
\end{equation}
The corresponding cumulative distribution function (CDF) is given by:
\begin{equation}
	P_\mu(z):=\int_{-\infty}^z {\rm d}z' p_\mu(z)=\exp\left(-\exp(-z+\mu)\right),
\end{equation}
which is the probability of drawing a random variable $g\sim p_\mu(g)$ that is smaller than $z$. It follows that a standard Gumbel random variable $g$ can be obtained by transforming a standard uniform random variable $u$ by 
\begin{equation*}
	g=-\log(-\log u).
\end{equation*}

Given the definitions above, we will now prove the following Lemma due to Refs.~\onlinecite{gumbel1954,NIPS2014_5449}:
\begin{lemma}
	Let $\{g_i\}_{i=1}^N$ be a set of independent and identically distributed (i.i.d.) samples drawn from $p_0(g_i)$. Then
	\begin{equation}
		k^*=\argmax_{k\in\{1:N\}}\left\{g_i + \log \pi_i\right\}_{i=1}^N \hspace{0.5 cm} 
	\end{equation}
	is a random variable that is drawn from the categorical distribution $\{\pi_i\}_{i=1}^N$.
\end{lemma}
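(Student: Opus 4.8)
\textit{Proof proposal.} The plan is to compute the probability $\mathbb{P}(k^{*}=j)$ directly for each $j\in\{1,\dots,N\}$ (assuming w.l.o.g.\ $\pi_j>0$; states with $\pi_j=0$ have $\log\pi_j=-\infty$ and are never selected) and show it equals $\pi_j$. First I would observe that, since the $g_i$ are drawn from the continuous distribution $p_0$, ties in the set $\{g_i+\log\pi_i\}_{i=1}^N$ occur with probability zero, so the $\argmax$ is almost surely well defined and it suffices to evaluate $\mathbb{P}(k^{*}=j)=\mathbb{P}\!\left(g_j+\log\pi_j\geq g_i+\log\pi_i\ \ \forall i\neq j\right)$.

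The key step is to condition on $g_j$ and exploit independence. Given $g_j=z$, the event ``$j$ attains the maximum'' is $\bigcap_{i\neq j}\{g_i\leq z+\log\pi_j-\log\pi_i\}$, whose probability factorises (by independence) into $\prod_{i\neq j}P_0\!\left(z+\log\pi_j-\log\pi_i\right)$, with $P_0(z)=\exp(-e^{-z})$ the standard Gumbel CDF derived above. Since $P_0\!\left(z+\log\pi_j-\log\pi_i\right)=\exp\!\left(-e^{-z}\,\pi_i/\pi_j\right)$, multiplying by the density $p_0(z)=e^{-z}\exp(-e^{-z})$ and collecting exponents gives an integrand proportional to $e^{-z}\exp\!\left(-e^{-z}\,\big(1+\sum_{i\neq j}\pi_i/\pi_j\big)\right)=e^{-z}\exp\!\left(-e^{-z}/\pi_j\right)$, where the last equality uses the normalisation $\sum_i\pi_i=1$.

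The final step is the substitution $t=e^{-z}$ (so $\mathrm{d}t=-e^{-z}\,\mathrm{d}z$, mapping $z\in(-\infty,\infty)$ to $t\in(\infty,0)$), which turns $\mathbb{P}(k^{*}=j)=\int_{\mathbb{R}}e^{-z}\exp(-e^{-z}/\pi_j)\,\mathrm{d}z$ into $\int_0^{\infty}\exp(-t/\pi_j)\,\mathrm{d}t=\pi_j$. This proves the lemma. The reparametrisation interpretation then follows immediately: each $g_i$ is realised as $g_i=-\log(-\log u_i)$ with $u_i\sim\mathrm{Unif}[0,1]$, so all dependence on the parameters $\{\pi_i\}$ resides in the deterministic additive shifts $\log\pi_i$, separated from the parameterless noise $\{g_i\}$.

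I expect the only mild obstacle to be bookkeeping rather than anything conceptual: carefully splitting off the $i=j$ term when recombining the exponentials so that $1+\sum_{i\neq j}\pi_i/\pi_j=\tfrac{1}{\pi_j}\sum_i\pi_i=1/\pi_j$ appears cleanly, and justifying the interchange of the conditioning integral over $g_j$ with the product over $i\neq j$ — which is immediate from independence and Tonelli's theorem since all integrands are non-negative. Everything else is the elementary Gumbel integral above.
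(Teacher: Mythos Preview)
Your proposal is correct and follows essentially the same approach as the paper: condition on the value of the candidate maximiser, factor the remaining event using independence and the Gumbel CDF, then integrate out. The only cosmetic differences are that the paper works with the shifted Gumbels $z_i=g_i+\log\pi_i$ rather than the standard $g_i$, and evaluates the final integral by recognising a total derivative instead of your substitution $t=e^{-z}$.
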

\begin{proof}
	Let $x_i := \log \pi_i$. Suppose that we draw a set of $N$ i.i.d. samples $\{z_i \sim P_{x_i}(z_i)\}_{i=1}^N$. Let $k^*$ be defined such that $z_{k^*}>z_{i\neq k^*}$. Given that the $k$'th sample takes the value $z_k$, the probability for $z_k$ being the greatest among all $N$ samples is then given by (using the definition of the Gumbel CDF):
	\begin{equation}
		p(k = k^* | z_k) = \prod_{i\neq k} P_{x_i}(z_k).
	\end{equation}
	Because of the Bayes' rule we have:
	\begin{equation*}
		p(k=k^*, z_k)=p(k=k^*|z_k)p(z_k),
	\end{equation*}
	so the probability of the $k$'th sample taking the greatest value $\forall z_k$ can be obtained by marginalising this over $z_k$
		\begin{align}
			p(k = k^*) &= \int_{-\infty}^\infty {\rm d}z_k p_{x_k}(z_k)\prod_{i\neq k}P_{x_i}(z_k)\nonumber\\
			&= \int_{-\infty}^\infty  {\rm d}z_k \exp(-z_k+x_k)\exp\left(\exp(-z_k+x_k)\right)\prod_{i\neq k}\exp\left(-\exp(-z_k+x_i)\right)\nonumber\\
			&= \exp(x_k) \int_{-\infty}^\infty  {\rm d}z_k \exp\left[ -z_k  - \exp(-z_k)\sum_{i=1}^N \exp(x_i)\right]\nonumber\\
			&= \exp(x_k) \int_{-\infty}^\infty \frac{{\rm d}}{{\rm d}z_k} \frac{\exp\left(-\exp\left(-z_k\sum_{i=1}^N \exp(x_i)\right)\right)}{\sum_{i=1}^N \exp(x_i)}\nonumber\\
			&=\frac{\exp(x_k)}{\sum_{i=1}^N \exp(x_i)}=\pi_k.
		\end{align}
	In other words, we have found that the index of the random sample with the largest value from $\{z_i \sim P_{x_i}(z_i)\}_{i=1}^N$ is distributed according to the categorical distribution defined by $\{\pi_i\}_{i=1}^N$. By definition, we have:
	\begin{equation}
		k^*=\argmax_{k\in\{1:N\}}\{z_i \sim P_{x_i}(z_i)\}_{i=1}^N \sim \{\pi_i\}_{i=1}^N.
	\end{equation}
	We can reparametrise the random variable $z \sim P_{x_i}(z)$ in terms of the parameter $x_i$ of the distribution and the standard Gumbel random variable $g\sim P_0(g)$ as simply $z=g+x_i$. Recalling that $x_i=\log \pi_i$, we finally arrive at the desired result
	\begin{equation}
		k^*=\argmax_{k\in\{1:N\}}\{g_i + \log \pi_i\}_{i=1}^N \sim \{\pi_i\}_{i=1}^N.
	\end{equation}
\end{proof}

\subsubsection{Gumbel-softmax reparametrisation}
Still, the argmax function is not differentiable with respect to $\pi_i$. The idea \cite{jang2016categorical, maddison2016concrete} allowing to circumvent this issue is to smear-out the argmax, replacing it by the softmax function. Given $\{g_i\}_{i=1}^N$ we define a vector-valued random variable utilizing the softmax function Eq.~(\ref{eq:si_softmax}), whose j-th component takes the form:
\begin{equation}
	{\rm softmax}_{j,\epsilon}\left(\{g_i + \log \pi_i\}_{i=1}^N\right)=\frac{\exp\left[(\log \pi_j + g_j)/\epsilon\right]}{\sum_{i=1}^N\exp\left[(\log \pi_i + g_i)/\epsilon\right]},
\end{equation}
where $\epsilon$ is the smearing parameter. For $\epsilon \rightarrow 0$ the softmax becomes the argmax function, mapping the argument vector $y=\{g_i + \log \pi_i\}_{i=1}^N$ into a $N$-component one-hot vector (one-hot encoding maps each of $N$ possible states $i$ of a discrete variable into a $N$-dimensional vector, with $1$ on $i$-th position, and zeros elsewhere) with some $k^*$-th entry taking the value $1$, thereby marking $y_{k^*}=\max y$. 
The resulting random variable is called a Gumbel-softmax random variable; it is only approximately (or pseudo-) discrete, for small enough $\epsilon$ (do not confuse with a discrete random variable defined by taking the maximum component of the softmax function). 
In practice though, the error coming from using a finite $\epsilon$ can be made comparable to machine precision. In the next subsection we explain the full training procedure in more detail. The samples from the Gumbel-softmax approximation of a certain categorical distribution $\{\pi_k\}_{k=1}^N$ are approximately one-hot vectors for small $\epsilon$. More concretely, for $\epsilon\approx0$, a sample vector $h\sim{\rm softmax}_\epsilon(\{g_i + \log \pi_i\}_{i=1}^N)$ has a single component very close to $1$ and all other components take very small values. Correspondingly, the expectation value of the samples $h$ generate almost exactly the set of frequencies $\{\pi_k\}$ of the categorical distribution, see \emph{e.g.} $\epsilon=0.01$ in  Fig.~\ref{gumbelsoftmax_relaxation}.  With increasing $\epsilon$, the samples deviate further from the one-hot form as multiple of their components start to take finite values (see $\epsilon=0.6,10$ in \emph{e.g.} Fig.~\ref{gumbelsoftmax_relaxation}) and the expectation value for the samples deviates from the original distribution $\{\pi_k\}$ as it becomes uniform over all components in the limit $\epsilon\to \infty$.

\subsubsection{Annealing of the Gumbel-softmax}\label{GS_annealing}
There is a trade-off between small $\epsilon$ which leads to very noisy gradient estimates, and large $\epsilon$ at which the gradients have low variance but the samples are far from being discrete. To reconcile this, we start the training at a high value of $\epsilon$ and anneal it towards a small positive value during training and stiffen the pseudo-discrete variable into an increasingly better approximate of a discrete one. This is illustrated in Fig.~\ref{gumbelsoftmax_relaxation}. A discrete random variable (drawn from a categorical distribution), which takes a value corresponding to one of $N=5$ categories can be represented by a one-hot vector $h=\{h_k\}_{k=1}^N$.

\begin{figure*}[]
	\centering
	\includegraphics[width=0.85\linewidth]{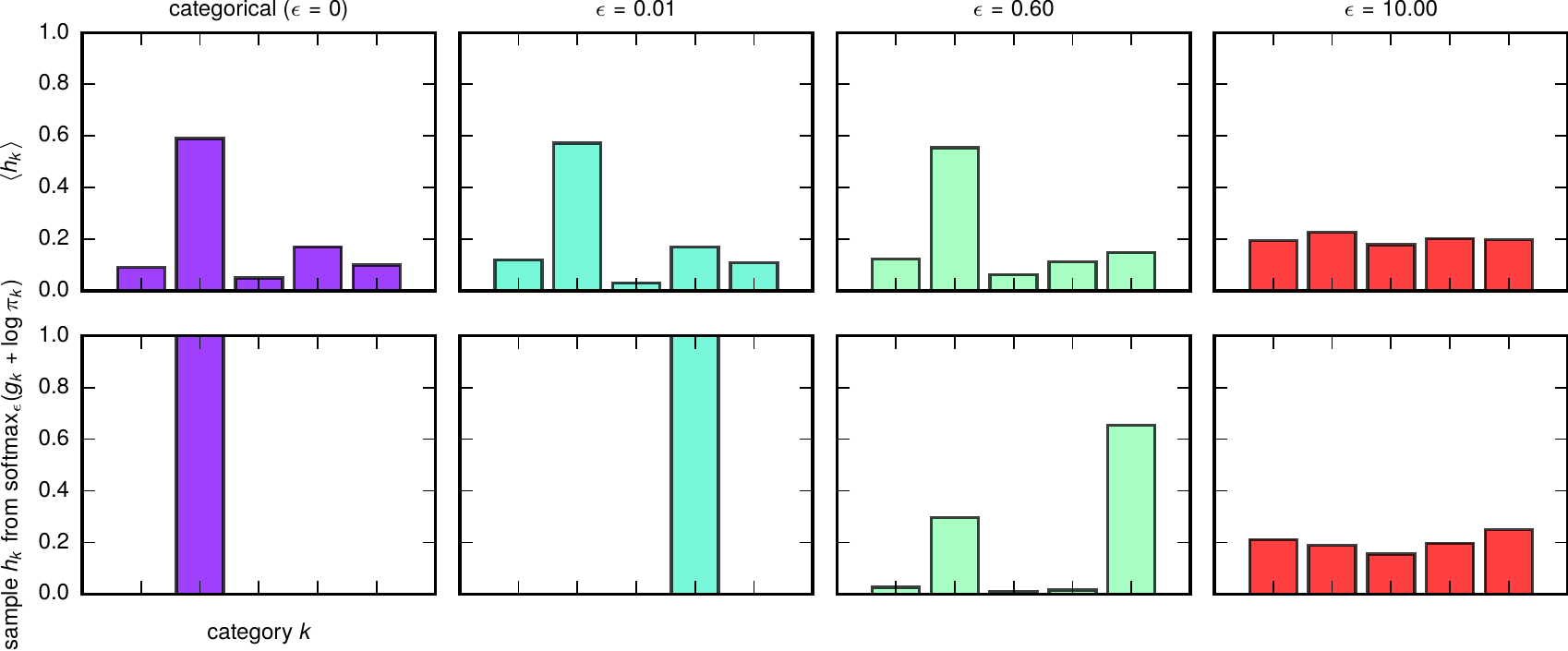}
	\caption{{\bf The Gumbel-softmax distribution for different values of the relaxation parameter.} Upper panel: Expectation value of the sample vectors from the Gumbel-softmax (GS) distribution with 5 categories, for different values of the parameter $\epsilon$. Lower-panel: single samples drawn from the GS distribution with the $\epsilon$ above. They are approximately one-hot when $\epsilon$ is small, see the text. For comparison, the categorical case where the sample is exactly a one-hot vector is shown in the left-most column. Figure adapted from Ref.~\onlinecite{jang2016categorical}.}
	\label{gumbelsoftmax_relaxation}
\end{figure*}

We have empirically observed that the annealing allows quicker convergence during training compared to a small fixed relaxation parameter, if the annealing schedule is chosen appropriately. More concretely, letting $t$ be the current step of the training, we have opted for an exponential scheduling of the form:
\begin{equation}
	\epsilon_t = \max(\epsilon_{\rm min}, \epsilon_{\rm max}\exp(-r t)).
\end{equation}
The parameter $r$ and the initial and the minimum values of $\epsilon$ are determined by experimentation.

In Tab.~\ref{CG_arch} we give the details of the architecture for the coarse-graining network used for the 2D Ising and 2D dimer models. In both cases we stack a single layer $\Lambda$-filter (generally with multiple kernels, corresponding to different components of $\mathcal{H}$) and the Gumbel-softmax reparametrisation layer to embed the components of $\mathcal H$ into (pseudo-) binary variables. While we determine the relaxation parameter $r$ by experimentation and fix it for both models, we tune the initial value of the Gumbel-softmax temperature $\epsilon_{\rm max}$ according to the total number of iterations during training. 

\begin{table}[H]
	\centering
	\caption{\textbf{Architecture details of the coarse-graining module of RSMI-NE for 2D Ising (anti-)ferromagnet and 2D interacting dimer model on a square lattice.}}
	\begin{tabular}{|c||c|c|}
		\hline
		model & 2D Ising  & 2D dimer \\
		\hline
		\hline
		$L_{\mathcal{V}}$ & $4$(F),$5$(AF)  & 8  \\
		\hline
		$L_{\mathcal{B}}$ & $\{0,\cdots,8\}$ & $\{2,4,6,8\}$ \\
		\hline
		$L_{\mathcal{E}}$ & 10 & 4  \\
		\hline
		number of components of $\mathcal{H}$ & 1 & 2   \\
		\hline
		embedding of $\mathcal H$ & binary &  binary \\
		\hline
		$(\epsilon_{\rm max}$, $\epsilon_{\rm min})$ for Gumbel-softmax (GS) & $(0.5,0.1)$ & $(0.75,0.1)$ \\
		\hline
		GS annealing parameter $r$ & $5\times 10^{-3}$  & $5\times 10^{-3}$ \\
		\hline
	\end{tabular}
	\label{CG_arch}
\end{table}

\subsection{Training convergence}
In Tab.~\ref{RSMIne_training_ising_dimer} we tabulate typical values for the parameters chosen for the training: specifically, we give the learning rate, sample sizes, mini-batch sizes, total number of iterations, and the total runtimes until convergence for the 2D Ising and 2D interacting dimer systems (which are extremely short). We have found that separating the full sample dataset into mini-batches to be used in a single iteration of training greatly enhances the performance. Also note that have used the same learning rate for both systems. 

\begin{table*}[]
	\centering
	\caption{\textbf{RSMI-NE training parameters for 2D Ising anti-ferromagnet and 2D interacting dimer model on a square lattice.}}
	\begin{tabular}{|c||c|c|c|}
		\hline
		model & 2D Ising ($T=T_c$)  & 2D dimer ($T<T_{\ssm BKT}$) & 2D dimer ($T>T_{\ssm BKT}$) \\
		\hline
		\hline
		sampling size& 20000  & 32400 & 32400  \\
		\hline
		mini-batch size& 100 & 400 & 400 \\
		\hline
		number of epochs & 30 & 15 & 50  \\
		\hline
		learning rate & $5\times10^{-3}$ & $5\times10^{-3}$ & $5\times10^{-3}$   \\
		\hline
		total runtime & \SI{35}{\s}  & \SI{8}{\s} & \SI{34}{\s}  \\
		\hline
	\end{tabular}
	\label{RSMIne_training_ising_dimer}
\end{table*}
\subsection{Criteria for convergence: halting the training}\label{convergence}
We typically repeat the training epochs described in Alg. \ref{RSMI_alg_training} until the estimated RSMI value converges. That is, if the RSMI estimate at the $t$'th epoch is $\tilde{I}^{(t)}_{\Lambda}(\mathcal{H}:\mathcal{E})$, we halt the training if
\begin{equation}
	\tilde{I}^{(t)}_{\Lambda}(\mathcal{H}:\mathcal{E}) - \tilde{I}^{(t-1)}_{\Lambda}(\mathcal{H}:\mathcal{E}) \leq \Delta,
\end{equation}
where $\Delta$ is a convergence threshold. 

While this simple halting condition is usually sufficient, there are cases where the convergence is slow because of the oscillatory behaviour of the RSMI estimate series. We can suppress the oscillatory behaviour by replacing the most recent estimate by its average with the previous estimate, that is:
\begin{equation}
	\tilde{I}^{(t+1)}_{\Lambda}(\mathcal{H}:\mathcal{E}) \leftarrow \alpha \tilde{I}^{(t+1)}_{\Lambda}(\mathcal{H}:\mathcal{E})  + (1-\alpha) \tilde{I}^{(t)}_{\Lambda}(\mathcal{H}:\mathcal{E}), \hspace{0.5 cm},
\end{equation}
with $\alpha \in \left [ 0,1\right[$ and use the halting condition above.

\paragraph{BFGS solver on the converged model.} A more sophisticated method to reach convergence is to apply quasi-Newton non-linear solvers to the trained RSMI \textit{ansatz} $\tilde{I}^{(t)}_{\Lambda}(\mathcal{H}:\mathcal{E})$. In this direction, we optionally use the Broyden-Fletcher-Goldfarb-Shanno (BFGS) solver to refine the most recent estimate:
\begin{equation}
	\tilde{I}^{(t+1)}_{\Lambda}(\mathcal{H}:\mathcal{E}) \leftarrow {\rm BFGS}[\tilde{I}^{(t+1)}_{\Lambda}(\mathcal{H}:\mathcal{E})].
\end{equation}
(Interested reader is directed to Refs.~\onlinecite{10.1093/imamat/6.3.222,BFGS} for the details of the BFGS algorithm.) The BFGS method uses a positive definite approximation to the Hessian matrix to find the search direction. Due to the large number of parameters of the optimisation problem, it is more convenient to apply the BFGS solver to refine only the coarse-graining parameters $\Lambda$, whilst keeping the parameters of the InfoNCE \textit{ansatz} $f(h,e)$ fixed.

\section{Generating dimer model samples with the directed loop Monte Carlo algorithm}\label{directedloopMC}
We have implemented the MC directed-loop algorithm \index{directed-loop algorithm} (DLA), which was first introduced by Sandvik and Moessner,\cite{PhysRevB.73.144504} to sample configurations from the interacting dimer model. Starting from a valid dimer configuration (at an arbitrary temperature), the DLA provides non-local moves in the configuration space of dimers on a lattice by changing positions of dimers along a closed loop (or a ``worm"). The high efficiency of this algorithm is due to the fact that the closed loops can be quite large, and they are formed according to a local detailed balance condition, without a further acceptance criterion.

One MC sweep of DLA comprises the following steps:
\begin{enumerate}
	\item Place a \textit{worm} on a valid dimer configuration at a random lattice site $i$. As we shall see, the worm is constituted by two monomer defects at its tail and its head.
	\item Move the head of the worm to site $j$, which is connected to site $i$ by a dimer in the background configuration, denoted by $(ij)$. Remove the dimer $(ij)$, thereby leaving the sites $i$ and $j$ with no dimers attached to them, \emph{i.e.}~as monomer defects.
	\item According to local detailed balance condition, randomly select one of the nearest neighbours of $j$, say site $k$. Then move the head of the worm to site $k$ and put a dimer $(jk)$ in between.
	\item Repeat $2-3$ until the worm becomes a closed loop, \textit{i.e.} $i=k$. Upon closing the loop, we get a valid dimer configuration as the monomer defects overlap and are annihilated.
\end{enumerate}
We provide the pseudocode describing a single sub-sweep of the DLA in Alg.~\ref{DLA_subsweep}:

\begin{algorithm*}[]
	\caption{Single sub-sweep of the directed loop algorithm}
	\label{DLA_subsweep}
	\begin{algorithmic}[1]
		\State $X \leftarrow$ PW algorithm \Comment{Generate random free dimer configuration}
		\While{unvisited sites in lattice $\neq \emptyset$}
		\State $i_0\leftarrow $ random integer $\in [1,L^2]$
		\While{$k\neq i_0$}
		\State $j \leftarrow$ site connected to $i$ by a dimer
		\State remove the dimer between sites $i$ and $j$
		\For{$nn$ in nearest neighbours of $j$}
		\State compute $w_{(nn\hspace{0.05cm}j)}$
		\EndFor
		\State $\mathbf{p}^j\leftarrow$ solve DLE using $w_{(nn\hspace{0.05cm}j)}$ and get scattering matrix \Comment{see Eq.~(\ref{scattering_matrix})}
		\State $k\leftarrow$ sample from $p^j_{i\to k}$
		\State add a dimer between sites $j$ and $k$
		\State $i\leftarrow k$
		\EndWhile
		\EndWhile
	\end{algorithmic}
\end{algorithm*}

The detailed balance conditions for the transition probabilities of the worm's head's position are determined by the fugacity of the local dimer configurations, with which they contribute to the partition function. The fugacity is given by the (unnormalised) weight of a dimer $(ij)$ defined as:
\begin{equation}\label{intdimer_weights}
	w_{(ij)}:=\exp(N_{(ij)}/T),
\end{equation}
where $N_{(ij)}\in \{0,1,2\}$ is the number of nearest neighbours of site $i$ which has a dimer parallel to $(ij)$.

The transition probabilities for moving the head of the worm from $i$ to $k$ [or replacing the dimer $(ij)$ with $(jk)$] must satisfy the detailed balance condition:
\begin{equation}\label{directed_loop_eqs}
	p[(ij)\to(jk)] w_{(ij)} = p[(jk)\to(ij)] w_{(jk)},
\end{equation}
which are also known as the directed-loop equations (DLEs).

The Eqs.~\ref{directed_loop_eqs} are under-determined: with the normalisation condition they constrain only 10 of the elements of the $4\times 4$ scattering matrix. It is common practice to specify the remaining transition weights by imposing minimisation of the so-called \textit{bounce} processes $p[(ij)\to(ji)] $. This allows to avoid trivial back-tracking moves of the worm, leading to the longest possible loops. As suggested by Alet \textit{et al.},\cite{PhysRevE.74.041124} linear programming (LP) techniques can be used to find the solution of the DLEs that minimise the bounce probabilities. In what follows we explain how the directed loop equations with minimal bounce probabilities can be formulated as an LP problem.

\subsection{Linear programming formulation of the directed loop equations}
We will formulate of the directed loop equations (DLEs) as a linear programming (LP) problem. Our task is to derive a Markov chain transition probability:
\begin{equation}
	p_{i\to k}^j := p[(ij)\to(jk)],
\end{equation}
to move the head of the worm from vertex $i$ to vertex $j$ using the DLEs. Let us define:
\begin{equation}
	a^j_{i\to k}:=p_{i\to k}^j w_{(ij)},
\end{equation}
with the unnormalised weights $w_{(ij)}$ defined as in Eq.~(\ref{intdimer_weights}).  Since the transition probabilities are normalised, it satisfies the relation:
\begin{equation}\label{a_w_relation}
	\sum_{k\in {\rm nn}(j)} a^j_{i\to k} = w_{(ij)}, \hspace{0.5cm} \forall i,
\end{equation}
where the sum is taken over the four nearest neighbours of site $j$, ${\rm nn}(j):=\{R(i),L(i),B(i),U(i)\}$.

The DLEs are simply given by the conditions of local detailed balance:
\begin{equation}
	a^j_{i\to k} = a^j_{k\to i},
\end{equation}
and the normalisation $\sum_{k\in {\rm nn}(i)} p_{i\to k}^j=1$ for all $i$. Crucially, the DLEs are under-determined and we can impose the restriction on the solution that the trivial bounce events are suppressed as much as possible. This implies minimising the objective:
\begin{align}
	f:&=\sum_{i\in{\rm nn}(j)} a_{i\to i}^j\nonumber\\
	&=C-\sum_{i \in{\rm nn}(j)} \sum_{k\neq i } a_{i\to k}^j,
\end{align}
under the normalisation constraint $w_{(ij)}\leq \sum_{k\neq i} a^j_{i\to k}$. Here, $C$ is a constant in $a^j_{i\to k}$, and the second line in the expression for $f$ follows from Eq.~(\ref{a_w_relation}). 

Since we work on a square lattice, we can define a $4\times 4$ scattering matrix:
\begin{equation}\label{scattering_matrix}
	\mathbf{p}^j:=\begin{pmatrix}
		p_{R(j)R(j)} & p_{R(j)L(j)} & p_{R(j)B(j)} & p_{R(j)U(j)} \\
		p_{R(j)L(j)} & p_{L(j)L(j)} & p_{L(j)B(j)} & p_{L(j)U(j)} \\
		p_{B(j)R(j)} & p_{B(j)L(j)} & p_{B(j)B(j)} & p_{B(j)U(j)} \\
		p_{U(j)R(j)} & p_{U(j)L(j)} & p_{U(j)B(j)} & p_{U(j)U(j)}
	\end{pmatrix},
\end{equation}
where 10 of the elements are restricted by the DLEs and the normalisation. As for the 6 free parameters, we select the following subset:
\begin{align}
	\mathbf{x}^{\rm T}:=(&a_{R(j)R(j)}, a_{R(j)B(j)}, a_{R(j)U(j)},\nonumber\\ 
	&a_{L(j)B(j)}, a_{L(j)U(j)}, a_{B(j)U(j)}).
\end{align}

LP solves problems of the form:
\begin{equation}\label{LP}
	\max_{\mathbf{x}} \mathbf{c}^{\rm T} \mathbf{x}, \hspace{0.5cm} \text{such that}  \hspace{0.5cm}\mathbf{A}\cdot \mathbf{x}\leq \mathbf{b}.
\end{equation}
Specifically, for the DLEs, we have:
\begin{equation}\label{LP_ing1}
	\min_{\mathbf{x}}f= \max_{\mathbf{x}}\left(\sum_{i, k\neq i} a^j_{i\to k} \right) = \max_{\mathbf{x}}  \underbrace{(1, 1, 1, 1, 1, 1)}_{=:\mathbf{c}^{\rm T}} \mathbf{x},
\end{equation}
and the normalisation constraint is imposed by using:
\begin{equation}\label{LP_ing2}
	\mathbf{A}=\begin{pmatrix}
		1&1&1&0&0&0\\
		1&0&0&1&1&0\\
		0&1&0&1&0&1\\
		0&0&1&0&1&1
	\end{pmatrix} \hspace{0.5cm} \text{and} \hspace{0.5cm} \mathbf{b}=\begin{pmatrix}
		w_{(R(j)j)}\\
		w_{(L(j)j)}\\
		w_{(B(j)j)}\\
		w_{(U(j)j)}
	\end{pmatrix}.
\end{equation}
Having identified the matrices in Eqs.~\ref{LP_ing1} and \ref{LP_ing2} we can proceed to solve the linear system of Eqs.~\ref{LP}, \textit{e.g.}~via the simplex method, using standard LP libraries. Note that the efficiency of the algorithm can be significantly increased by pre-computing all possible scattering matrices at the outset, and using the tabulated values during the construction of the loops.

Observe that in order to use the DLA we still need a valid dimer configuration to start with, which is non-trivial to generate for large systems. Even though finding a valid dimer covering on an arbitrary graph is difficult, for the square lattice it is possible to efficiently (time complexity linear with system size) get random free dimer configurations using the Propp-Wilson (PW) algorithm (also known as coupling-from-the-past method) by generating loop-erased random walks.\cite{kenyon1999trees} This method exploits a bijection between spanning trees of a undirected graph and valid dimer coverings. Our implementation for sampling from the interacting dimer model begins by generating a random dimer configuration using the PW algorithm.

A single MC sweep comprises constructing several closed loops (worms) until all sites on the lattice are visited at least once. To reduce the autocorrelations further, our single sweep consists of multiple sub-sweeps. The length of the worms get smaller at low temperatures, and the updates essentially become local single-dimer flips, at best. Hence, in order get uncorrelated samples, the number of sub-sweeps has to grow as the temperature is reduced. Moreover, to ensure the balance within all 4 columnar configurations at low temperatures, we performed 180 runs with different random PW initialisations, using 200 sweeps for each temperature. Our simulations were carried out on lattices with periodic boundary conditions. 

		\end{widetext}
	
	\else%

	\fi%

	\bibliographystyle{naturemag}
	\bibliography{ref}
	

\end{document}